\newcommand{\bigO}[1]{\ensuremath{\operatorname{O}\bigl(#1\bigr)}}
\DeclareMathOperator{\C}{\mathit{C}\,}
\DeclareMathOperator{\CT}{\mathit{CT}\,}
\newcommand{\ver}[1]{#1}
\newtheorem{theorem}{Theorem}[section]
\newtheorem{lemma}[theorem]{Lemma}
\theoremstyle{remark}
\newtheorem{remark}{Remark}[section]
\newtheorem{definition}[theorem]{Definition}
\newcommand{\zo}{\{0,1\}}
\newcommand{\ie}{i.e.}
\newcommand{\poly}{\text{poly}}
\newcommand{\eps}{\varepsilon}
\newcommand{\RIGHT}{\rm RIGHT}
\newcommand{\LEFT}{\rm LEFT}
\begin{document}

\title{Short lists with short programs in short time \footnote{An extended abstract of this work has been presented at the $28$th IEEE Conference in Computational Complexity, Stanford, CA, June 5-7, 2013.}}
\author{{Bruno Bauwens\/}
\thanks{National Research University Higher School of Economics.
     This work was partially supported by the NAFIT ANR-08-EMER-008-01 project.}
\and{{Anton Makhlin}\/}
\thanks{Moscow State University. The work
was in part supported by the RFBR grant 12-01-00864 and the ANR grant
ProjetANR-08-EMER-008; email: amakhlin@bk.ru}
\and{{Nikolay Vereshchagin}\/}
\thanks{National Research University Higher School of Economics. The work
was in part supported by the RFBR grant 16-01-00362.
E-mail: ver@mccme.ru, 
WWW home page: 
http://lpcs.math.msu.su/$\widetilde{\hskip 1ex}$ver. }
\and
{ {Marius Zimand\/}
\thanks{  Department of Computer and Information Sciences, Towson University,
Baltimore, MD.; email: mzimand@towson.edu; http://triton.towson.edu/\~{ }mzimand.
The author was supported in part by NSF grant CCF 1016158.}
}
}
\date{}

\maketitle

\begin{abstract} 
  Given a machine $U$, a $c$-short program for $x$ is a string $p$ such that
  $U(p)=x$ and the length of $p$ is bounded by $c$ + (the length of a shortest
  program for $x$).  We show that for any standard Turing machine, it is
  possible to compute  in polynomial time  on input $x$ a  list of polynomial
  size guaranteed to contain a $\bigO{\log |x|}$-short program  for $x$.  We
  also show that there exists a computable function that maps every $x$ to a
  list of size $|x|^2$ containing a $\bigO{1}$-short program for $x$.
  This is essentially optimal because we prove that for each such 
  function there is a $c$ and infinitely many $x$ for which the list has size
  at least $c|x|^2$.  Finally we show that for some standard machines,
  computable functions generating lists with $0$-short programs, must have
  infinitely often list sizes proportional to $2^{|x|}$. 
\end{abstract}

{\bf Keywords:} list-approximator, Kolmogorov complexity, on-line matching, expander graph

\section{Introduction}
The Kolmogorov complexity of a string $x$  is the length of a shortest program computing it. Determining the Kolmogorov complexity of a string  is a  canonical example of  a function that is not computable. 
Closely related, and non-computable as well,  is the problem of actually producing a shortest program for $x$. 
It is natural to ask if tasks marred by such an impossibility barrier can  be   effectively solved at least in some approximate sense.
This issue has been investigated for Kolmogorov complexity in various ways. First of all,  it is well-known that the Kolmogorov complexity can be effectively approximated from above. A different type of approximation is given by what is typically called \emph{list computability} in algorithms and complexity theory and \emph{traceability} (\emph{enumeration}) 
in  computability theory. For this type of approximation, one would like to compute a list of ``suspects'' for the result of the function with the guarantee that the actual result is in the list. Of course, the shorter the list is, the better is the approximation.

The list approximability of the Kolmogorov complexity, $C(x)$, has been studied by Beigel et al.~\cite{bei-bur:j:enumerations}.  They observe that $C(x)$ can be approximated by a list of size $(n-a)$ for every constant $a$, where $n = |x|$. On the other hand, they show that, for every universal machine $U$, there is a constant $c$ such that for infinitely many strings $x$ (in fact for at least one $x$ at each sufficiently large length $n$), any computable list containing $C_U(x)$ must have size larger than $n/c$.

In this paper we study  list approximability for the problem of producing short programs. In order to describe our results, we need several formal definitions.

A machine $U$ is {\em optimal} if $\C_U(x|y) \le C_V(x|y) + \bigO{1}$ for all machines $V$  (where the
constant $\bigO{1}$ may depend on $V$).   
An optimal machine $U$ is \emph{standard}\footnote{
This notion was introduced by Schnorr~\cite{schnorr}, and he called such machines \emph{optimal
G\"odel numberings} (of the family of all computable functions from strings to strings). 
We use a different term 
to distinguish between optimal functions in Kolmogorov's sense and Schnorr's sense},  
if for every machine $V$ there is a 
total computable function $t$ such that for all $p,y$: $|t(p)| = |p| +\bigO{1}$ and $U(t(p),y) = V(p,y)$ if $V(p,y)$ is defined. 
For results that hold in polynomial time, we additionally assume these functions $t$ can be computed in
time polynomial in $|p|$.
Let $U(p)$ stand for $U(p,\text{the empty string})$ and  $C_U(x)$ for $C_U(x|\text{the empty string})$.
A {\em $c$-short program} for $x$ with respect to  $U$ is a string $p$ that satisfies
$U(p) = x$ and $|p| \le \C_U(x) +c$. 

Given an optimal machine $U$,
a \emph{list-approximator} for $c$-short programs is a function $f$ that on every input $x$ outputs a finite list of strings such that at least one of the elements in the list is a $c$-short  program for $x$ on $U$.  
Let $|f(x)|$ denote the number of elements in the list  $f(x)$. Obviously, for every optimal $U$, there is a (trivial) computable list-approximator $f$ such that $|f(x)| \leq 2^{|x|+\bigO{1}}$. 

The question we study is how small can $|f(x)|$ be for computable list-approximators $f$ for $c$-short programs, where $c$ 
is a constant or $\bigO{\log|x|}$.
At first glance it seems that in both cases  $|f(x)|$ must be exponential in $|x|$.
Surprisingly, this is not the case. We prove that there is a computable approximator
with list of size $|x|^2$ for $c$-short programs for some constant $c$ depending on the choice of the standard machine $U$.
And we show that this bound is tight.
We show also that there is a \emph{polynomial time computable}
approximator with list of size $\poly(|x|)$ for  $c$-short programs for $c=\bigO{\log|x|}$.

We start with our main upper bound. 
We show  that for every standard machine, there exists a list-approximator for $\bigO{1}$-short programs, with lists of \emph{quadratic} size.  

\begin{theorem} \label{t:complist}
For every standard machine $U$ there exist  a constant $c$ and a computable function $f$ that for any $x$ produces a
list with  $|x|^2$ many elements containing a program $p$ for $x$ of length $|p| \le C_U(x) + c$. 
\end{theorem}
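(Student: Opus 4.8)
The plan is to move the problem off of $U$. Since $U$ is standard, for every computable machine $V$ there is a total computable translator $t$ with $U(t(p))=V(p)$ and $|t(p)|=|p|+O(1)$; hence it suffices to construct a computable machine $V$ together with a computable $f$ such that $|f(x)|=O(|x|^2)$ and $f(x)$ contains some $p$ with $V(p)=x$ and $|p|\le C_U(x)+O(1)$ — then $\{t(p):p\in f(x)\}$ is the required list for $U$, of the same size up to a constant and with the additive overhead absorbed.

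The engine is on\-line matching in bipartite expanders. Dovetailing $U$ on all inputs enumerates, over time, all pairs $(y,k)$ for which some program of length $\le k$ outputs $y$; so for each threshold $k$ the $\le 2^{k+1}$ strings of $U$\-complexity $\le k$ are revealed one by one. I want to commit, for each revealed $y$ and each suitable threshold $k\ge C_U(y)$, once and for all, to a ``name'' for $y$ of length $k+O(1)$ chosen from a small fixed candidate set $N_k(y)$ depending only on $y$; then $V$, on input $q$ of length $k+O(1)$, reruns the process and outputs the string to which $q$ was committed. This requires, for each $k$, a fixed bipartite graph with the revealed $y$'s on the left and length\-$(k+O(1))$ names on the right, with left\-degree $O(|y|-k)$, enjoying an on\-line matching property: any adversary presenting at most $2^{k+1}$ left vertices on\-line can be served. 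Such graphs exist by the probabilistic method — a random graph of the stated degree is, with positive probability, a lossless expander for sets of size $\le 2^{k+2}$, and a Feldman\-\-Friedman\-\-Pippenger\-style argument converts lossless expansion into an on\-line matching strategy — and since we only need computability we may simply search for one. Taking $f(x)=\bigcup_{k\le |x|+O(1)}N_k(x)$ (then translated by $t$) gives a list of size $\sum_k O(|x|-k)=O(|x|^2)$, and the name committed to $x$ at the threshold $k=C_U(x)$ is a $V$\-program for $x$ of length $C_U(x)+O(1)$.

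The obstacle — the reason the statement is not routine — is squeezing the additive error down to $O(1)$ rather than $O(\log|x|)$. If the left set is taken large enough to contain $x$ itself (say all strings of length $\le|x|$), the union bound behind the expander forces right vertices of length $k+\Theta(\log(|x|-k))$, which only yields $O(\log|x|)$\-short programs; moreover a name of length $k+O(1)$ has no room to also record the length of its output, so graphs for different input lengths must share the same right sets, which in turn forbids placing all complexity\-$\le k$ strings (whose lengths can reach $2^{k+O(1)}$) on one left side. The remedy is to match directly only strings whose length is within a constant factor of their complexity — for which polynomially\-small left sets, hence small degree and constant slack, suffice — and to handle a string $x$ whose length far exceeds $C_U(x)$ by first passing, via a coarser use of the same machinery, to a much shorter intermediate program and then iterating, with $V$ composing the stages. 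Arranging this so that the effective iteration terminates after boundedly many stages and the per\-stage losses telescope to $O(1)$ rather than accumulating to $O(\log|x|)$ — i.e.\ ensuring each stage both strictly shrinks the complexity deficiency and costs only $O(1)$ in complexity — is the delicate heart of the argument, and the step I expect to be hardest.
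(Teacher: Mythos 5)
Your overall architecture matches the paper's: transfer to an auxiliary machine $V$ via the standard-machine property, enumerate $U$-programs by dovetailing, pass discovered pairs $(x,k)$ to an on-line matcher in a small-degree expander, and take the neighborhood of $x$ as the list (this is the paper's Theorem~\ref{th3} together with Theorem~\ref{th1a}). You also correctly pinpoint the obstacle to $O(1)$ overhead: a name of length $k+O(1)$ cannot encode $|x|$, so the expander at threshold $k$ must accept left vertices of \emph{every} length $\geq k$, and a naive fixed-degree probabilistic construction over so large a left side seems to need right nodes of length $k+\omega(1)$.

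The gap is in the proposed remedy. The iteration scheme --- repeatedly re-compress and hope per-stage $O(1)$ losses telescope --- is not shown to close, and I do not see how it could: the number of stages naturally grows like $\log(|x|/C_U(x))$, reintroducing $O(\log|x|)$ total slack, and even with a bounded number of stages you never argue why the candidate lists \emph{add} rather than \emph{multiply} under composition, which would blow up $O(|x|^2)$. The paper avoids iteration entirely with a single-shot construction using \emph{variable left degree}. Lemma~\ref{l:compositeMatching} gives, for each $k$, a computable bipartite $(2^k,2^k)$-expander $F_k$ with left side $\{0,1\}^{\geq k}$, right side $\{0,1\}^{k+3}$, in which the left degree of $x$ is $O(|x|)$ rather than a constant: a node $x$ of length $i$ with $k\le i\le K:=2^{k+3}$ gets $i+3$ independent random neighbors, which makes the union-bound failure probability over sets meeting each length class $L_i$ in $t_i$ elements decay fast enough to sum; and a node of length $>K$ is simply connected to \emph{all} $2^{k+3}\le|x|$ right nodes, which is automatically degree $O(|x|)$. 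Summing over the $|x|+O(1)$ relevant thresholds $k$ gives degree $O(|x|^2)$, and the conversion of Theorem~\ref{th7} retains constant overhead. The variable degree, not iteration, is the missing idea.
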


%The constant hidden in the notation $\bigO{|x|^2}$ is absolute. However 
\noindent
The constant $c$  depends on $U$. The next %our 
result shows that this dependence is unavoidable.

\begin{theorem}\label{th:computeSetsShortest}
(1) For every $c$ there 
exists a standard\/\footnote{The construction implies the existence of such $U$ with a stronger universality property:
for every machine $V$ there exists a string $w_V$ such that $U(w_V p) = V(p)$ for all $p$.}
machine $U$ such that for every  computable $f$ that is a  list-approximator for $c$-short programs:
\[
|f(x)| \ge e 2^{|x|},
\]
for some constant $e > 0$ and infinitely many $x$.\footnote{
  For $c=0$ this was independently obtained by Frank Stephan~\cite{ste:personalcomm:shortlistMachine}.
  }

(2) On the other hand, for every $c$ 
there is a standard\footnotemark[2] machine $U$ which has a computable
approximator  for $c$-short programs with lists  of size $|x|^2$.
(As any approximator of 0-short programs is an 
approximator for $c$-short programs for every $c\ge0$ as well, only the case $c=0$ matters.) 
 \end{theorem}

Thus for any $c$ the answer to the question of whether a standard machine 
$U$ has a computable
approximator for $c$-short programs with polynomial size lists depends on the choice of $U$.

If we allow $\bigO{\log |x|}$-short  programs we can construct lists of polynomial size in \emph{polynomial time}.

\begin{theorem} \label{t:polytimelist}
  For every standard machine $U$, there exists a polynomial-time computable function $f$  
  that for any $x$ produces a list with $\poly(|x|)$ many elements containing  a program for $x$ 
  of length $C_U(x)+ \bigO{\log |x|}$.
\end{theorem}

\noindent
\ver{Teutsch}~\cite{teu:j:shortlists} has improved this result 
replacing $\bigO{\log |x|}$ by~$\bigO{1}$. A further  improvement that brings the list size to $O(|x|^{6+\epsilon})$ has been 
obtained by Zimand~\cite{zim:c:shortlistshortproof}.

\medskip
Now we move to lower bounds that hold for all standard machines. 
A linear lower bound for the list size %bruno: size list  ->  list size
of any computable approximator for 
$c$-short programs can be easily derived
from the result of Beigel et al.~\cite{bei-bur:j:enumerations} cited earlier. Indeed, if there is  
a computable list-approximator for $c$-short programs of $x$ with list of size $s(x)$,
then we can algorithmically produce a list of size $s(x)c$ containing $C(x)$. 
Thus $s(x)c=\Omega(|x|)$ and $s(x)=\Omega(|x|/c)$.
(The constant hidden in the $\Omega$-notation depends on the universal machine $U$.) 
A  linear lower bound also follows
from a result of  Bauwens~\cite{bau-shen:j:compcomp}, improving a theorem of
G\'{a}cs~\cite{gac:j:symmetry}. The result states that for all universal machines $U$, 
$\C_U(C_U(x) \mid x)$ is greater than $\log |x| - \bigO{1}$ 
for infinitely many $x$.  Thus, for any computable approximator $f(x)$ for $c$-short programs for infinitely many $x$, 
\[
\log |x| - \bigO{1} \leq \C_U(\C_U(x) \mid x) \leq \log |f(x)| + 2 \log c + \bigO{1},
\]
and therefore $|f(x)| \geq \Omega(|x|/c^2)$. 

Note  the gap between the quadratic upper bound for list size of Theorem \ref{t:complist}
and the linear lower bound described above. 
We close this gap by showing that 
the  list size bound in Theorem \ref{t:complist} is optimal:
it is not possible to compute lists of subquadratic size that contain a $\bigO{1}$-short program. 

\begin{theorem}\label{th:quadraticlistt}
For all $c>0$, for every optimal $U$, for every computable $f$ that is  a  list-approximator for $c$-short programs,
\[
|f(x)| \ge \Omega(|x|^2/c^2),
\]
for infinitely many $x$.
(The constant hidden in the $\Omega$-notation depends on the function $f$ and machine $U$.)
\end{theorem}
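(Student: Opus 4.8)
The plan is to use a counting/incompressibility argument: if there were a computable list-approximator $f$ with $|f(x)| = o(|x|^2/c^2)$ for all sufficiently long $x$, then one could use the lists to describe strings too succinctly. Fix $n=|x|$ and restrict attention to strings $x$ of length $n$ that have near-maximal complexity, say $C_U(x) \ge n - O(1)$; there are at least $2^{n-1}$ such strings. For each such $x$, every element of $f(x)$ that is a $c$-short program has length at most $C_U(x)+c \le n+c$. The key observation is that a $c$-short program $p$ for $x$ is itself a string of complexity roughly $|p|$ (since $U(p)=x$ forces $C(x)\le |p|+O(1)$, and $C(x)$ is near-maximal), so $p$ is essentially incompressible of length about $n$.

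Here is the counting. Let $\ell = \lceil \log_2 |f(x)| \rceil$; by assumption $\ell \le 2\log n - \log(c^2) + O(1) = 2\log n - 2\log c + O(1)$ for the $x$'s we care about. Given $x$, the index $i$ of a $c$-short program inside the list $f(x)$ takes $\ell$ bits, and from $i$ and $x$ we recover that program $p = f(x)_i$. But we want to go the other way: I would show that $x$ can be reconstructed from $p$ together with a short amount of extra information, and that this extra information can be made shorter than $c$, contradicting $|p| \le C_U(x)+c$. Concretely, think of the bipartite ``inverse list'' relation $R = \{(x,p) : p \in f(x)\}$. A short program $p$ can appear in $f(x)$ for at most... — no, that bound goes the wrong way. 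Instead the right move is: among all strings $x$ of length $n$ with $C_U(x)\ge n-O(1)$, consider the (computable, given $n$) set $S$ of their $c$-short programs contributed by $f$; each $x$ contributes at most $|f(x)|$ programs, but each program has length $\le n+c$, so $|S| \le 2^{n+c+1}$, while $S$ must ``cover'' all $\ge 2^{n-1}$ high-complexity $x$'s, each via some program of length within $c$ of $n$. The pigeonhole is that programs of length exactly $n+j$ (for $0\le j\le c$) that are $c$-short for some length-$n$ high-complexity string are scarce: there are only $2^{n+j}$ strings of that length total, and a program of length $n+j$ being within $c$ of $C_U(x)$ forces $C_U(x)\ge n+j-c$. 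Balancing the number of $x$'s against the number of usable program-lengths and the list size $|f(x)|$ yields $|f(x)| \ge \Omega(n^2/c^2)$.

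More carefully, the argument I expect to work is a two-parameter counting. Group the $\ge 2^{n-1}$ incompressible strings $x$ by $C_U(x) \in \{n-O(1),\dots\}$; within a group the $c$-short programs have one of $O(c)$ possible lengths. For a fixed target length $m$, the programs of length $m$ in $\bigcup_x f(x)$ number at most $\sum_x |f(x)| \le 2^{n-1}\cdot \max|f(x)|$, but also trivially at most $2^m$. To describe a given high-complexity $x$, I specify: its complexity level (i.e.\ which of $O(c)$ length-buckets, $O(\log c)$ bits), the length $m$ of the chosen program (another $O(\log c)$ bits), and the rank of that program among length-$m$ strings in the list-union, which costs $\log(\#\text{such programs}) \le \min(m,\, \log(2^{n-1}\max|f(x)|))$ bits. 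Since $m \le C_U(x)+c$ and we need the total description to be $\ge C_U(x) - O(1)$ (incompressibility), we get $C_U(x) - O(1) \le \log(2^{n-1}\max|f(x)|) + O(\log c) = n-1 + \log\max|f(x)| + O(\log c)$, which with $C_U(x)\ge n - O(1)$ only gives $\max|f(x)| \ge \Omega(1/c^{O(1)})$ — too weak. The fix, and the real heart of the proof, is to iterate this over all $n$ simultaneously, or equivalently to run the covering argument with the $\log\max|f(x)|$ term counted against the *number of length-$n$ strings a single program can be responsible for*: a program $p$ of length $m$ can be a $c$-short program for at most $2^{c}$ many $x$ — no. The honest hard step is getting the exact exponent $2$, and I expect it comes from: the list must, for each of the $n$ possible complexity values $k\in\{n-c,\dots,n\}$-ish and each residual, supply a program, and a clean way to see $\Omega(n^2)$ is that roughly $n$ ``independent coordinates'' each need $\Omega(n)$ slots, or via an on-line matching / expander lower bound dual to the upper-bound construction.

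The main obstacle, as the above meandering indicates, is pinning down the correct counting that delivers exponent exactly $2$ (and the $1/c^2$ scaling) rather than a weaker bound; I would resolve it by considering, for parameter $t$, strings $x$ with $C_U(x) = n - t$ for each $t \in \{0,1,\dots, \Theta(n)\}$ separately, noting there are $\approx 2^{n-t}$ of them and that a single program of length $\le n - t + c$ can be $c$-short for such an $x$ only if its own complexity is $\ge n-t-O(1)$, hence at most $2^{O(c)}$ strings of each relevant length are ``reused'' across the construction; summing the list-size requirement $|f(x)| \gtrsim (n-t)/c$ needed at level $t$ over the $\Theta(n/c)$ well-separated levels $t = 0, c, 2c, \dots$ gives $\sum$ a quadratic-in-$(n/c)$ total that must be borne by lists at a single length, forcing $\max_{|x|=n}|f(x)| \ge \Omega(n^2/c^2)$, i.e.\ the claimed bound for infinitely many $x$.
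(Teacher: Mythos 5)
Your proposal does not reach a correct proof, and you candidly flag this yourself ("only gives $\max|f(x)| \ge \Omega(1/c^{O(1)})$ --- too weak"; "the honest hard step is getting the exact exponent 2"). The final paragraph's proposed fix is the part that actually needs to be a proof, and it has a genuine gap: you claim to "sum the list-size requirement $|f(x)| \gtrsim (n-t)/c$ needed at level $t$" over $\Theta(n/c)$ levels. But a fixed string $x$ of length $n$ has a single value of $C_U(x)$, hence a single level $t = n - C_U(x)$; the list $f(x)$ is only required to contain one $c$-short program for $x$, of length $\approx C_U(x)$. Nothing in the incompressibility framing forces $f(x)$ to contain entries at many different lengths, and the phrase "must be borne by lists at a single length" papers over exactly this jump. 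Also, the sub-claim "$|f(x)| \gtrsim (n-t)/c$ at level $t$" is the nontrivial per-level bound and is asserted, not derived: direct compressibility counting yields only the cited linear bound $\Omega(n/c^2)$ (from $C_U(C_U(x)\mid x) \ge \log |x| - O(1)$), not $\Omega(n^2/c^2)$.

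The paper's route, which you gesture at in one clause ("via an on-line matching / expander lower bound dual to the upper-bound construction") but do not develop, is different and combinatorial. First, Theorem~\ref{th3a} turns $f$ into a bipartite graph $G$ with left nodes $\zo^*$, edges $x \sim p$ for $p \in f(x)$, and proves (by a game argument against a "blind Matcher") that $G$ has on-line, hence off-line, matching with overhead $c + O(1)$. Then Theorem~\ref{th1b} is a purely graph-theoretic degree lower bound: for each $k$ the length-window subgraph $G[k-1,k+c]$ is a $(2^k, 2^{k-1}+1)$-expander (by the matching property), and Lemma~\ref{l1} (a K\H{o}v\'ari--S\'os--Tur\'an style random-subset argument) shows that in such an expander with $2^\ell$ left nodes and $\approx 2^{k+c}$ right nodes, some left node has degree $> \min\{2^{k-2},(\ell-k)/(c+2)\}$; taking $\ell = 3n/4$ and $k \in (n/4, n/2]$, all but $2^{3n/4}$ of the $2^n$ left nodes have degree $\gtrsim n/c$ in this window. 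Crucially, the $\Omega(n/c)$ windows $[k-1,k+c]$ for $k$'s spaced $c+2$ apart are pairwise disjoint, so for a typical left node $x$ these degree contributions add: $|f(x)| = \deg_G(x) \ge \Omega(n/c) \cdot \Omega(n/c) = \Omega(n^2/c^2)$. The structural point your proposal misses is exactly this accumulation mechanism: a single $x$ is forced to have many list entries of many different lengths not because of $x$'s own complexity, but because the matching-derived expander property applies globally and pins down the degree of most left nodes in every window simultaneously. If you want to salvage your approach, you should replace the per-level Kolmogorov counting with the expander/KST degree argument (or prove the per-window degree lower bound for fixed $x$ by some other route); as written the "sum over $t$" step is not valid.
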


%Optimality: Note that there exist $f$ and infinitely many $x$ such that $|f(x)|$ is much smaller than quadratic.  Indeed, consider $x = 00\dots0$ consisting of $2^k$ zeros for some $k$.  %We have $\C(x) \le \log\log |x| + \bigO{1}$, hence $f(x)$, may consist of all strings of length $\log\log |x| + \bigO{1}$, thus $|f(x)| \le \bigO{\log |x|}$.

%The next lower bound shows that,  at least for some standard machine $U$,  we can not compute  lists of subexponential size containing a program of 
%length exactly $\C(x)$.

 \smallskip

\textbf{Technical overview.}
A $c$-short program of a string $x$ is a compressed representation of $x$;  it is also a string of (close to) minimal length that retains all the randomness in $x$. The idea that comes to mind to approach list-approximability of $c$-short programs  is to use randomness extractors. Kolmogorov complexity extraction has been studied before by Fortnow, Hitchcock, Pavan, Vinodchandran, Wang, Zimand~\cite{fhpvw:j:extractKol,hit-pav-vin:j:Kolmextraction,zim:j:extractKolm,zim:c:symkolm,zim:j:kolmindep} (see also the survey paper by Zimand~\cite{zim:j:kolmextractsurvey}), and indeed randomness extractors for a constant number of independent sources have been employed for this task. For the list-approximability of $c$-short programs, it seems natural to use seeded extractors, because by iterating over all possible seeds, one obtains a list containing the optimally compressed string.
The problem is that  we need an extractor with logarithmic seed (because we want a list of polynomial size) and zero entropy loss (because we want  the compressed string to be a program for $x$,  \ie, to contain enough information so that $x$ can be reconstructed from it) and   such extractors have not yet been shown to exist.  Perhaps surprisingly,  simpler graphs satisfying less demanding combinatorial constraints than extractors graphs, are good enough for the list approximation of $c$-short programs (and also for this type of ``list"-extraction  of  Kolmogorov complexity). Inspired by the work of Musatov, Romashchenko and Shen~\cite{mus-rom-she:j:muchnik},  we use graphs that allow on-line matching.  These are unbalanced bipartite graphs, which, in their simplest form, have $\LEFT = \zo^n, \RIGHT=\zo^{k + \mbox{small overhead}}$, and left degree = $\poly(n)$, and  which permit on-line matching up to size $K=2^k$. This means that any set  of $K$  left nodes,  each one requesting   to be matched to some adjacent right node that was not allocated earlier, can be satisfied in the on-line manner (i.e., the requests arrive one by one and  each request is satisfied before seeing the next one; in some of our proofs we will allow a small number of requests to be discarded, but this should also happen before the next request arrives).  The correspondence to our problem is roughly that left strings are the strings that we want to compress, and  for any left $x$  we seek  its compressed form among its right neighbors.  To understand this correspondence, let us consider the easier situation of producing a short list containing a $c$-short  program for $x$ in case the program knows $n = |x|$. Also, let us assume that the Kolmogorov complexity of $x$ is $k$.  We start an enumeration of strings produced by $k$-bit programs, and when a string of length $n$ is enumerated (eventually, $x$), we use the on-line matching process and find it a match among its right neighbors, \ie, we compress it to length $k$ + small overhead.  To decompress,  we start with a right node (the compressed string) and we re-play the enumeration and  the matching process and see which left node has been matched to it; we output this left node.  The compressed program for $x$  is among its right neighbors and, therefore the set of right neighbors of $x$  is the desired list of polynomial size. Now, in fact,  neither $n$ nor $k$  are known,  and therefore we actually need the bipartite graph to be infinite and the matching requests for a left node $x$ to include the desired ($k + $ overhead) length for the matching right node. 
With this modification, it turns out that the construction of a list approximator $f$ for $c$-short programs is equivalent to the construction of an infinite bipartite graph $G$ that can satisfy the on-line matching requests with overhead equal to $c + \bigO{1}$. The size of the list $f(x)$ is equal to the degree of $x$ in $G$.
Such an infinite graph is obtained by taking the union of finite graphs of the type described above.
In order for a finite  graph to allow matching, it needs to have good expansion properties. It turns out that it is enough if left subsets of   size $K/\bigO{1}$ expand to size $K$. To obtain the graph required in Theorem~\ref{t:complist}, we use the probabilistic method (actually to get quadratic left degree, we need to refine the construction sketched above).  The explicit graph required in Theorem~\ref{t:polytimelist} is obtained from the disperser constructed by Ta-Shma, Umans, and Zuckerman~\cite{ats-uma-zuc:j:expanders}.

The lower bound in Theorem~\ref{th:quadraticlistt} is established via the equivalence between list approximability and graphs with on-line matching mentioned above. It is next observed that bipartite graphs capable of satisfying even  off-line matching need to have a certain expansion property and this imposes a lower bound on the left degree, which, as we have seen, corresponds to the list size.

The exponential lower bound in Theorem~\ref{th:computeSetsShortest} is shown using a connection with a type of Kolmogorov complexity that is less known, \emph{total conditional Kolmogorov complexity}, and then building strings with large such complexity using a game-theoretic approach. (The game-based technique in recursion theory was introduced by Lachlan~\cite{Lachlan} and 
further developed by A.Muchnik and others~\cite{KolmogorovGames,VerSurvey,ShenCie12}.)
\smallskip

\textbf{Paper organization.}
 The connection between list-approximability and graphs with on-line matching  is studied 
 in Section~\ref{matching}. \ver{We state there Theorems~\ref{th1a}, \ref{th2} and~\ref{th1b}
   about graphs with on-line matching and show that they imply
   Theorems~\ref{t:complist},
   \ref{t:polytimelist} and~\ref{th:quadraticlistt}, respectively.
   
The upper bounds, i.e.,
Theorem~\ref{th1a} (hence Theorem~\ref{t:complist}),
Theorem~\ref{th2} (hence Theorem~\ref{t:polytimelist})
and  Theorem~\ref{th:computeSetsShortest} (2) are proved in Section~\ref{s:ub}.
 The lower bounds, i.e., Theorem~\ref{th1b} (hence Theorem~\ref{th:quadraticlistt})}
 and Theorem~\ref{th:computeSetsShortest} (1), are proved in Section~\ref{s:lb}. 
In Section~\ref{s:other}, we observe that our technique  can  be used to improve Muchnik's 
Theorem~\cite{muc:j:condcomp} (see also the works of Musatov, Romashchenko and Shen~\cite{mus-rom-she:j:muchnik,mus:c:spacekolm,mus:c:spaceboundedextractor}), and a  result concerning  
distinguishing complexity of Buhrman, Fortnow, and Laplante~\cite{bfl:j:boundedkolmogorov}.

\section{List approximators for short programs and on-line matching}\label{matching}

We show that the problem of constructing approximators for short programs is equivalent to
constructing families of bipartite graphs which permit on-line matching, a notion 
 introduced in  a somewhat  different form in the paper of Musatov et al.~\cite{mus-rom-she:j:muchnik}.

Let a bipartite graph $G = (L,R,E\subseteq L\times R)$ be given,
where the set $L$ of left nodes and the set $R$ of right nodes consist of binary strings.
Assume that we receive ``requests for matching'' in the graph, each request having the form  (a binary string $x\in L$, a natural number $k$).
Such a request is satisfied  if we can assign to the left node $x$ a right neighbor  of length no larger than  $k$ plus a small overhead. 
%It might happen that we receive a request $(x,k)$ and later a request $(x,k')$ with the same $x$,  but another $k'<k$.
For any  $x \in L$  it is possible to have several requests $(x, k_1), (x, k_2), \ldots $ and, thus, a left node $x$ may receive as matches  several right nodes. On the other hand, a  right node cannot match more than one left node.    For every $k$, there are at most $2^k$ requests of the form $(x,k)$ for various $x \in L$.  
Assignments cannot be revoked.  We will sometimes call right nodes \emph{hash-values}.

\begin{definition}\label{overhead}
Let $c(n)$ be a function of $n$
with natural values. A bipartite graph $G = (L, R, E\subseteq L\times R)$, whose left and right nodes are binary strings   
\emph{has matching with overhead $c(n)$} if
the following holds. For every set $S \subseteq L \times   \mathbb{N} $ of pairs  $(x, k)$
having at most $2^k$ pairs with the second component $k$ for all $k$, one can choose
for every pair $(x,k)$ in $S$ a neighbor $p(x,k)$ of $x$ so that   
$|p(x,k)|\le k+c(|x|)$ and $p(x_1,k_1)\ne p(x_2,k_2)$ whenever $x_1\ne x_2$. 
It is allowed that $p(x,k_1) = p(x,k_2)$ for some $x$.
If this is done, we say that $p(x,k)$ \emph{matches} $x$. 

A bipartite graph   
has \emph{on-line matching with overhead $c(n)$} 
if this can be done in the on-line fashion: requests for matching
$(x,k)$ appear one by one and we have to find $p(x,k)$ before the next 
request appears. All the made assignments cannot be changed.

A bipartite graph   
has \emph{computable} (respectively, \emph{polynomial time computable}) 
on-line matching with overhead $c(n)$ if it has online matching and the
matching strategy is computable (respectively, if the match for a left node $x$ can 
be found in time polynomial in $|x|$). 
We assume that the graph is available to the matching algorithm.
\end{definition} 
\begin{definition}
A bipartite graph is \emph{ computable (polynomial-time computable)} if given a left node~$x$ we can compute (respectively, compute in  time polynomial in $|x|$)
the list of all its neighbors. A polynomial-time computable graph is also said to be \emph{explicit}.
\end{definition}

The next two theorems show that for any function $c$, the existence of a
computable (polynomial-time computable)  list approximator $f$ for
$c(|x|)$-short programs is equivalent the existence of a computable
(polynomial-time computable) graph $G$ that has on-line matching with overhead
$c(|x|) + \bigO{1}$ and $|f(x)|$ is equal to the degree of $x$ in $G$; 
up to an assumption on the computability of the matching strategy.
\begin{theorem}[Graph $G$ with on-line matching $\Rightarrow$ list-approximator $f$]\label{th3}
Assume there is a  computable  
graph with $L=\zo^*$ where each left node $x$ has degree $D(x)$ and 
which has on-line matching with overhead $c(n)$. 
Assume further that the matching strategy is computable. 
Then for every standard machine $U$ there exists a computable function $f$ that for any $x$ produces a
list with  $D(x)$ many elements containing a program $p$ for $x$ of length $|p| = C_U(x) + c(|x|)+\bigO{1}$.
If the graph is polynomial time computable then the function $f$ is polynomial time computable, too.
\end{theorem}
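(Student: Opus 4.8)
The plan is to turn the on-line matching strategy in the given graph $G=(\zo^*,R,E)$ into a compression-decompression scheme and then feed it to the standard machine $U$. The requests will be generated by enumerating, from above, the approximations to Kolmogorov complexity: for each $x$ and each $k$, as soon as we discover that $C_U(x)\le k$, we issue the request $(x,k)$ to the matching strategy and record the hash-value $p(x,k)\in R$ it produces. The crucial combinatorial point that makes this legitimate is that, for a fixed $k$, the number of strings $x$ with $C_U(x)\le k$ is at most $2^k$ — this is exactly the cardinality bound required in Definition \ref{overhead}, so the on-line strategy never fails and always returns a match of length $\le k+c(|x|)$. Note the requests for a given $x$ arrive with non-increasing second component as better and better upper bounds on $C_U(x)$ are found, which is precisely the scenario Definition \ref{overhead} allows (a left node may receive $(x,k)$ and later $(x,k')$ with $k'<k$ and acquire a second match).

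Next I would build an auxiliary machine $V$ that decodes hash-values. On input a string $p$ (thought of as an element of $R$), $V$ runs the above enumeration of $C_U$-approximations, simulating the matching strategy, and waits until some request $(x,k)$ is assigned the hash-value $p$; when (and if) this happens, $V$ outputs that $x$. Since the matching strategy assigns distinct right nodes to distinct left nodes, $p$ determines $x$ unambiguously, so $V$ is well-defined; it is a (partial) computable function, and if $G$ and the strategy are polynomial-time computable then — with a little care about how much of the enumeration must be simulated — $V$ runs in time polynomial in $|p|$, hence polynomial in $|x|$ as well (because $|p|\le C_U(x)+c(|x|)+O(1)\le |x|+O(1)+c(|x|)$). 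Now invoke standardness of $U$: there is a total computable $t$ with $U(t(p))=V(p)$ and $|t(p)|=|p|+O(1)$, and $t$ runs in polynomial time in the polynomial-time setting. Then, writing $k=C_U(x)$, the true complexity, the request $(x,k)$ is eventually issued, the strategy returns some $p=p(x,k)$ with $|p|\le k+c(|x|)$, and $q:=t(p)$ satisfies $U(q)=V(p)=x$ and $|q|=|p|+O(1)\le C_U(x)+c(|x|)+O(1)$ — exactly the length we want.

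Finally, define $f(x)$: compute the list $N(x)$ of all $D(x)$ neighbors of $x$ in $G$ (this is where computability, resp. polynomial-time computability, of the graph is used), and output $f(x)=\{\,t(p) : p\in N(x)\,\}$, a list of $D(x)$ strings. The short program $q$ constructed above is of the form $t(p)$ with $p=p(x,C_U(x))$ a neighbor of $x$, so $q\in f(x)$, and every element of $f(x)$ is a string of length $|p|+O(1)$; thus $f$ is a list-approximator of the required size, computable (resp. polynomial-time computable) whenever the graph and the strategy are. The main obstacle I anticipate is purely a matter of care rather than depth: making the simulation in $V$ terminate correctly and in the right time bound. We must ensure $V$ only needs to run the enumeration "far enough" to see the request carrying $p$, and we must argue that the on-line strategy's decisions about earlier requests — which $V$ must replay faithfully — are all reproducible by $V$ from scratch; since the strategy is a fixed computable function of the request sequence, this is fine, but one should check that the number of requests preceding the relevant one, and the work per request, stay polynomial in $|p|$ in the efficient setting. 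The rest — the counting bound $|\{x : C_U(x)\le k\}|\le 2^k$ and the bookkeeping with $t$ — is routine.
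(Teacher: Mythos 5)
Your proposal follows the same route as the paper: run $U$ in parallel, feed requests to the on-line matcher, let $V$ invert the matching, translate with $t$ from standardness, and output $f(x)=\{t(p):p\in N(x)\}$. Two points, though, need correction.

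First, the stated cardinality bound is wrong. You write that "the number of strings $x$ with $C_U(x)\le k$ is at most $2^k$"; in fact there are up to $2^{k+1}-1$ such strings (one per program of length $\le k$). The bound that actually matters for Definition~\ref{overhead} is different: among the requests issued with second component exactly $k$, there are at most $2^k$ distinct first components. This holds because each such request $(x,k)$ is triggered by a halting computation $U(q)=x$ with $|q|=k$, and different $x$'s need different programs $q$ of length $k$. (Indeed, the paper simply issues $(x,|q|)$ whenever $U(q)$ halts with value $x$, without tracking the current best upper bound on $C_U(x)$; the constraint is still met for this reason.) The mechanism you describe works, but its justification must be phrased in terms of programs of length $k$, not in terms of strings with $C_U(x)\le k$.

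Second, the claim that $V$ runs in time polynomial in $|p|$ in the efficient setting is false and, fortunately, also unnecessary. $V$ must replay the enumeration of halting computations of $U$ until it sees the request matched to $p$; since $U$ has no time bound, there is no polynomial bound on how long this takes, regardless of how efficiently the graph and the matching strategy are computable. What makes $f$ polynomial-time is not $V$'s running time: $f(x)$ only computes the neighbor list $N(x)$ (polynomial by assumption) and applies $t$ to each neighbor, and $t$ is polynomial-time by the definition of a standard machine in the polynomial setting. The machine $V$ enters only as the object to which standardness is applied, to guarantee the \emph{existence} of a suitable $t$; its own complexity is irrelevant. So the passage worrying about "the right time bound" for $V$'s simulation should simply be dropped.
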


\begin{proof}
Recall that for polynomial time results, we assume that a machine is standard through a
function that is computable in polynomial time.

Run the optimal machine  $U(q)$ in parallel for all strings $q$.
Once $U(q)$ halts with the result $x$  
we pass the request $(x,|q|)$ to the matching
algorithm in the graph and find a hash value $p$ of length at most $|q|+c(|x|)$ for $x$. 

By construction, every string $x$ is matched to a string $p$ of length
at most $C_U(x)+c(|x|)$. Each right node is matched to at most one node in the graph. 
Hence there is a machine $V$ such that $V(p)=x$ whenever $p$ is matched to $x$. 
Thus for every string  $x$ there is a neighbor $p$ of  $x$ with
$|p|\le C_U(x)+c(|x|)$ and $V(p)=x$. As $U$ is a standard machine, there is a (polynomial time) computable
function $t$ with $U(t(p))=V(p)$ and $|t(p)|\le|p|+\bigO{1}$. 
Let $f(x)$ be the list consisting of  $t(p)$ for all  the neighbors $p$ of $x$ in the graph. By  construction
$|f(x)|=D(x)$ and we are done.  
\end{proof}

%There is also a reduction in the other direction. 

\begin{theorem}[List-approximator $f$ $\Rightarrow$  graph $G$ with on-line matching]
\label{th3a}
Assume that $c(n)$ is (polynomial-time) computable function and 
there are an optimal machine $U$ 
and a (polynomial-time) computable function $f$ that for any $x$ produces a
finite list containing a program $p$ for $x$ of length $|p| \le C_U(x) + c(|x|)$.
Consider the bipartite graph $G$ with $L=\zo^*$
where the neighbors of node $x$ are all strings from $f(x)$.
Then $G$  has on-line  matching with overhead $c(|x|)+\bigO{1}$.
\end{theorem}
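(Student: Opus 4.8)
The plan is to handle the matching requests on-line, turning each requested left node into a short hash-value by running $f$ together with the optimal machine $U$. The structural fact that makes this work is that a single-valued function is injective on its graph: if every left node $x$ is matched to some $p$ with $U(p)=x$, then two distinct left nodes can never receive the same match, since $U(p_1)=x_1$, $U(p_2)=x_2$ and $p_1=p_2$ force $x_1=x_2$. So, in contrast with the usual matching arguments, the injectivity requirement of Definition~\ref{overhead} will come for free, and the only genuine task is to keep the matches short enough.

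When a request $(x,k)$ arrives I would compute the finite list $f(x)$ and dovetail the computations $U(p)$ over all $p\in f(x)$, searching for some $p\in f(x)$ with $U(p)=x$ and $|p|\le k+c(|x|)+O(1)$. In the regime $k\ge C_U(x)$ — which is both the only regime in which the budget $k+c(|x|)+O(1)$ can possibly be met and the regime in which such requests are produced, as in the proof of Theorem~\ref{th3}, where the request attached to a halting computation $U(q)=x$ is $(x,|q|)$ with $|q|\ge C_U(x)$ — the hypothesis that $f(x)$ contains a program for $x$ of length at most $C_U(x)+c(|x|)$ guarantees that such a $p$ exists, so the search halts; I set $p(x,k)$ to the first one found and never revise it. Everything is completed before the next request is read, and $f$, $c$ and the parallel simulation of $U$ are computable, so the strategy is on-line and computable. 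By the stopping condition $|p(x,k)|\le k+c(|x|)+O(1)$, i.e. the overhead is $c(|x|)+O(1)$, and $p(x,k)\in f(x)$ is by definition a neighbour of $x$ in $G$.

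The step I expect to require the most care is pinning down precisely in what sense one may restrict to the regime $k\ge C_U(x)$: one has to match the purely combinatorial promise on $S$ (at most $2^k$ pairs with second coordinate $k$, for every $k$) to the complexity-theoretic content it is meant to encode — there are at most $\sim 2^k$ strings of complexity about $k$ — and check that legal requests therefore never ask for a match shorter than the $U$-preimage we are able to exhibit. Once this is settled, all three clauses of Definition~\ref{overhead} are in place — neighbour membership and injectivity as explained above, the length bound from the stopping condition — and on-line-ness and computability are immediate from the dovetailing construction, completing the proof.
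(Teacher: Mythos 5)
Your strategy only ever attempts to match $x$ to a string $p\in f(x)$ with $U(p)=x$, i.e.\ to an actual program for $x$, and you invoke the injectivity of $U$-preimages to get the injectivity requirement for free. This restriction is fatal: the matcher is \emph{not} required to match $x$ with a program for $x$ — any neighbour in $G$, i.e.\ any element of $f(x)$, is admissible — and indeed on some requests it is forced to use non-programs, at which point injectivity is no longer free. Concretely, the legality constraint on $S$ in Definition~\ref{overhead} is purely combinatorial (at most $2^k$ requests with second component $k$), so the Requester may submit $(x,0)$ for a single $x$ of her choosing, including $x$ with $C_U(x)$ arbitrarily large. For such $x$, every $p$ with $U(p)=x$ has $|p|\geq C_U(x)$, which exceeds the budget $0+c(|x|)+O(1)$ as soon as $C_U(x)>c(|x|)+O(1)$; your dovetailing search will run forever and the request will not be answered. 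Your attempt to justify restricting to $k\geq C_U(x)$ — ``legal requests therefore never ask for a match shorter than the $U$-preimage we are able to exhibit'' — is exactly what fails: the bound of $2^k$ requests at level $k$ does not bound the complexity of the \emph{strings} in those requests, only their \emph{number}. The analogy with Theorem~\ref{th3} is also misplaced: there the requests arise from halting computations $U(q)=x$ and hence satisfy $k=|q|\geq C_U(x)$ by construction, whereas here the matching must survive \emph{any} legal request sequence, and the constant $O(1)$ must be fixed independently of it.

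The paper's proof is of a genuinely different character and does not exhibit a simple explicit matcher at all. It first reduces to finite subgraphs $G_n$ (left side restricted to strings of length $\leq n$), and shows by compactness (stitching together restrictions of strategies $M_n$ for the finite games) that it suffices to prove that each $G_n$ has on-line matching with a uniformly bounded overhead constant. This claim is then established by contradiction: if for every $i$ some $G_n$ fails with overhead $c(|x|)+i$, then because $G_n$ is finite one can algorithmically search for a winning Requester strategy and run it against a ``blind'' matcher; the unfulfilled request $(x,k)$ yields a machine $V$ showing $C_U(x)\leq k+2\log i+O(1)\leq k+i$, whence $f(x)$ contains a program of length $\leq k+i+c(|x|)$ that the blind matcher would have used — contradiction. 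The crucial step you are missing is precisely this indirect, game-theoretic argument that converts the combinatorial hypothesis on $S$ into a complexity bound on the unfulfilled $x$, something that only works because the Requester strategy and the unfulfilled index can themselves be produced by a uniform algorithm over the finite games $G_n$.
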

\begin{proof} 
For each $n$ let $G_n$ be the subgraph of $G$ with $L=\zo^{\le n}$ and $R$ the set of neighbors of $L$.
Without loss of generality, we assume that all strings in $f(x)$ have length at most $|x|+\bigO{1}$
and hence the graph $G_n$ is finite.  
We claim that $G_n$ has on-line matching with overhead $c(|x|)+\bigO{1}$ for all $n$,
(where the $\bigO{1}$ constant does not depend on $n$). 

We first show that this implies the theorem.
Suppose that $M_1, M_2, \dots$ are on-line matching strategies for graphs $G_1, G_2, \dots$
It suffices to convert them to strategies $M'_1, M'_2, \dots$ for $G_1, G_2, \dots$
such that for all $i$, $j > i$ strategy $M'_j$ is an extension of $M'_i$, 
i.e. on a series of requests only containing nodes from $G_i$, 
strategy $M'_j$ behaves exactly as $M'_i$. 
Because each $G_n$ is finite, there are only finitely many different matching strategies for $G_n$.
Hence, there is a strategy $M'_1$ that equals the restriction of $M_n$ to $G_1$ for infinitely many $n$.
Therefore there is also a strategy $M'_2$ that is an extension of $M'_1$ and 
equals the restriction of $M_n$ to $G_2$ infinitely often, and so on. 

It remains to show the claim.
For the sake of contradiction assume that for every constant $i$ there is $n$
such that $G_n$ does not have  on-line matching with overhead  $c(|x|)+i$, 
and $f$ is a list-approximator for $c(|x|)$-short programs on $U$. 
Because $G_n$ is finite, for all $n$ and $c$ one can find
algorithmically (using an exhaustive search) whether $G_n$ has on-line matching  
with overhead  $c(|x|)+i$ or not. One can also find a winning strategy for that player
who wins (``Matcher'' or ``Requester'').
Therefore for every $i$ we can algorithmically find the first $n$
such that the graph $G_n$ does  not have on-line matching with overhead  $c(|x|)+i$ 
and the corresponding winning strategy for Requester for $G_n$.

Let that strategy play against the following ``blind'' strategy of Matcher.
Receiving a request $(x,k)$,  the Matcher runs $U(p)$ for all $p\in f(x)$,  $|p|\le k+c(|x|)+i$,  in parallel. 
If for some $p$, $U(p)$ halts with the result $x$, he matches the first such $p$ to $x$ and proceeds to the next
request. Otherwise the request remains not fulfilled. 

Consider the following machine $V$. On input $(q,i)$, with $q$ a $k$-bit long string and $i$ a natural number,  it finds 
the first $G_n$ such that the graph $G_n$ does  not have  on-line matching with overhead  $c(|x|)+i$ and a winning strategy for Requester,
and runs it against the blind strategy of the  Matcher. Then it returns $x$, where $(x,k)$ is the $q$th 
request with the second component $k$  (we interpret the string  $q$ as the ordinal of the request in some  standard manner).
Since the Requester wins, there is a request $(x,k)$ that 
was not fulfilled. 
We have 
\begin{equation}\label{eq34}
C_U(x)\le C_V(x)+\bigO{1}\le k+ 2\log i+\bigO{1}\le k+ i;
\end{equation}
the last inequality holds for all large enough $i$.
As the request $(x,k)$  
was not fulfilled, there is no $p$ in $f(x)$ with $|p|\le k+i+c(|x|)$. Due to~\eqref{eq34},
$f(x)$ has no $c(|x|)$-short program for $x$, a contradiction. 
\end{proof}

We have reduced the problem of  list approximation for $c$-short programs  to the construction of bipartite graphs with on-line matching that have polynomial left degree.
Our main technical contributions are the following theorems.

\begin{theorem}[Combinatorial version of Theorem~\ref{t:complist}]\label{th1a}
There is a computable graph with $L=\zo^*$ and left degree $D(x) = |x|^2$ 
which has polynomial time on-line matching with overhead $\bigO{1}$. 
\end{theorem}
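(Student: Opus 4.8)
The statement is purely combinatorial (by Theorem~\ref{th3} it already yields Theorem~\ref{t:complist}), so the whole task is to exhibit a computable bipartite graph on $L=\zo^*$ with left degree $O(|x|^2)$ that supports on-line matching with constant overhead. I would organize the neighbours of a left node $x$ with $|x|=n$ into ``scales''. Fix a large constant $c$. For each scale $k$ with $k\le n-c$, node $x$ gets $\Theta(n)$ neighbours among the strings of length exactly $k+c$; for every scale $k>n-c$, node $x$ gets the single extra neighbour $0x$ (of length $n+1\le k+c$), which is its ``self-match'' value and is distinct for distinct $x$. The total degree is $\sum_{k\le n-c}\Theta(n)+1=O(n^2)$. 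A request $(x,k)$ with $k>n-c$ is served by $0x$; a request with $k\le n-c$ must be served inside scale $k$. So the problem reduces to: for each $k$, build a bipartite graph $H_k$ between $\{x:|x|\ge k+c\}$ and $\zo^{k+c}$ in which node $x$ has degree $\Theta(|x|)$ and which supports on-line matching of any set of at most $2^k$ left nodes (that is exactly the bound Definition~\ref{overhead} allows at scale $k$).

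\textbf{The on-line matching engine.} I would first prove the following lemma, which is the heart of the argument: if every finite set $A$ of left nodes of a bipartite graph with $|A|\le K$ satisfies Hall's condition $|N(A)|\ge|A|$, then the graph admits \emph{irrevocable} on-line matching of any on-line-revealed sequence of at most $K$ left nodes. The strategy maintains the invariant that, after matching a set $A$ to $\mu(A)$, the graph with $A$ and $\mu(A)$ deleted still satisfies Hall up to the remaining budget; when a new node $x$ arrives, I claim some still-free neighbour of $x$ keeps the invariant --- otherwise, for every free neighbour $w$ there is a ``tight'' set $B_w$ witnessing the failure after deleting $w$, and since tight sets are closed under union, the union $B^*$ of these sets together with $x$ violates Hall in the current deleted graph. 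This step is where all the care goes, because it is precisely what separates on-line matching (no re-assignments) from ordinary Hall/König matching.

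\textbf{Probabilistic construction of the scale graphs, and computability.} For $H_k$, let each left node $x$ of length $\ell\ge k+c$ pick $d(\ell)=\Theta(\ell)$ neighbours in $\zo^{k+c}$ uniformly at random. For a set $A$ of $a$ left nodes with total length $s=\sum_{x\in A}|x|\ge ak$, there are at most $2^{O(s)}$ such sets, and the probability that $|N(A)|<a$ is at most $2^{O(ak)}\cdot 2^{-\Omega(cs)}$, since all $\Omega(s)$ edges out of $A$ must avoid a fixed set of size $a-1$, each missing it only with the factor $2^{-c}$. Taking $c$ a sufficiently large constant makes the exponent $\le-\Omega(s)$, so with positive probability (indeed for ``most'' graphs) $H_k$ satisfies Hall for all sets of size up to $2^k$, hence is on-line matchable there by the lemma. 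To obtain a genuinely computable graph and a computable matching strategy I would not use a random graph: for each $m$ the restriction of $H_k$ to left nodes of length $\le m$ is finite and ``satisfies Hall up to $2^k$'' is decidable, so one picks the first such finite graph, and --- exactly as in the proof of Theorem~\ref{th3a} --- chooses these consistently (each extending the previous) by a limiting argument; on the resulting graph the on-line strategy of the lemma is computable by exhaustive search in the relevant finite subgraph.

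\textbf{Main obstacle.} The difficulty is the \emph{constant} overhead. With $c=O(1)$ the right side $\zo^{k+c}$ at scale $k$ is only a constant factor larger than the number $2^k$ of admissible requests, so no lossless/large expansion is available and the whole construction has to run on bare Hall for small sets --- which is why the on-line matching lemma must be established in the sharp no-re-matching form above rather than invoked as a black box. The degree budget then forces exactly $\Theta(|x|)$ neighbours per scale, and since the left side $\zo^*$ of each $H_k$ is infinite, the union bound over all subsets would diverge unless the degree is allowed to grow with $|x|$ and unless $c$ is a large enough constant to kill the $2^{\Theta(s)}$ counting factor; balancing these three things (Hall-only expansion, $\Theta(|x|)$ degree, constant $c$) is the real work. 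Making the resulting object and its strategy computable is a secondary obstacle, resolved by the finite-truncation/consistency limiting argument; if one relaxes the target overhead to $O(\log|x|)$ this slack disappears and one can afford the explicit expanders used for Theorem~\ref{t:polytimelist}.
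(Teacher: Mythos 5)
Your whole argument rests on the ``on-line matching engine'' lemma: that Hall's condition for all sets of size at most $K$ already implies irrevocable on-line matching of any $K$ requests. That lemma is false, and the place where it fails is exactly the step you flag as ``where all the care goes.'' Here is a counterexample with $K=2$. Take left nodes $a,b,c$ and right nodes $1,2$, with $N(a)=\{1,2\}$, $N(b)=\{1\}$, $N(c)=\{2\}$. Every set of at most two left nodes satisfies Hall: singletons have at least one neighbour, and each pair has both right nodes as neighbours. Yet no on-line strategy can serve two requests: on the first request $a$ you must commit to $1$ or $2$, and the adversary then asks for whichever of $b,c$ you just starved. Tracing your invariant argument on this example shows the flaw precisely: when $a$ arrives, each free neighbour $w\in\{1,2\}$ has a tight set $B_w$ ($B_1=\{b\}$, $B_2=\{c\}$), and your union $B^*=\{a\}\cup B_1\cup B_2=\{a,b,c\}$ indeed violates Hall --- but $|B^*|=3>K$, so Hall gives you no contradiction. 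In general the union of the witnessing tight sets can be much larger than the request budget, so ``Hall up to $K$'' is simply too weak a hypothesis; on-line matchability is genuinely stronger than off-line matchability, and constant overhead per scale gives you no slack to absorb the gap.

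Because of this, your per-scale graphs $H_k$ (for which you prove only Hall, i.e.\ $(t,t)$-expansion for $t\le 2^k$) do not admit on-line matching up to $2^k$, and the construction collapses. The paper's proof avoids this by never asking a single expander to serve all $2^k$ requests: it shows instead that a $(2^{k-1},2^{k-1})$-expander supports on-line matching up to $2^{k+1}$ \emph{with at most $2^{k-1}$ rejections} (a greedy counting argument), and then cascades the rejected nodes to the next lower scale. The overhead added at each scale is $O(1)$, and the cascade is the mechanism that converts mere expansion into genuine on-line matching. Your scale decomposition and the probabilistic construction of the per-scale graphs are in the right spirit (they are close to Lemma~\ref{l:compositeMatching}), but without the rejection-plus-cascade idea --- or without a much stronger expansion hypothesis than Hall --- the on-line step is unjustified. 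As a side remark, the paper explicitly leaves open whether a graph with these parameters ($L=\zo^{\ge k}$, $R=\zo^{k+O(1)}$, degree $O(n)$) that is a $(t,t)$-expander for all $t\le 2^k$ has on-line matching up to $2^k$; your lemma, if true, would resolve that open question affirmatively, which should itself have been a warning sign.
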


\begin{theorem}[Combinatorial version of Theorem~\ref{t:polytimelist}]\label{th2}
There is a polynomial time computable  
graph with $L=\zo^*$ with left degree $D(x) = \poly(|x|)$ 
which has polynomial time on-line matching with overhead $\bigO{\log |x|}$. 
\end{theorem}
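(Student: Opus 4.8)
The plan is to reduce Theorem~\ref{th2} to the existence of explicit bipartite expanders and then run an on-line matching strategy that handles each ``level'' essentially separately. Fix a length $n$ for left nodes. Only finitely many target lengths $k$ are relevant: a request $(x,k)$ with $k>n$ can be served at level $n$ (reusing an earlier, short enough match for $x$ whenever one exists), so it suffices to have, for each $k\in\{0,1,\dots,n\}$, an explicit bipartite graph $E_{n,k}\colon\zo^n\to[2^{k+d(n)}]$ of left-degree $D(n)=\poly(n)$, where $d(n)=\log D(n)+O(1)=O(\log n)$, with the lossless-expansion property: every $A\subseteq\zo^n$ with $|A|\le 2^k$ has $|N_{E_{n,k}}(A)|\ge(1-\eps)D(n)\,|A|$ for a fixed small $\eps$. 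Encode $[2^{k+d(n)}]$ by binary strings of length $k+d(n)+O(\log n)=k+O(\log n)$, tagged with $(n,k)$ so that the right-node sets for distinct $(n,k)$ are pairwise disjoint. Declare $x\in\zo^n$ adjacent to all of its $E_{n,k}$-neighbours over all $k\le n$; then $x$ has $(n+1)D(n)=\poly(n)$ neighbours, and the whole graph is polynomial-time computable as soon as the $E_{n,k}$ are.

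For the $E_{n,k}$ I would plug in one of the standard polynomial-degree (near-)lossless expander constructions from the extractor/condenser literature --- for instance the Reed--Solomon / Parvaresh--Vardy based constructions or those of Guruswami--Umans--Vadhan --- which for a constant error give precisely left-degree $\poly(n)$ and a right side of size $2^{k+\log D(n)+O(1)}$. This is also the source of the logarithmic overhead, and the reason it cannot be driven below $\Theta(\log n)$ by this method: a graph of left-degree $D$ that must host $2^k$ matched left nodes needs a right side of size $\ge 2^k$, so reserving room for $D$ candidates per node costs $\log D=\Theta(\log n)$ in the target length.

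The delicate part --- which I expect to be the main obstacle --- is verifying that this graph actually has on-line matching with overhead $O(\log n)$. By the tagging, a request $(x,k)$ with $|x|=n$ can be served only by right nodes tagged $(n,k')$ with $k'\le k$ (choosing the $O(\log n)$ constants so that ``$k'\le k$'' is exactly the length constraint), and such nodes are touched only by other length-$n$ requests; so each $E_{n,k}$ has to absorb at most $2^k$ on-line requests. The naive ``assign any free neighbour'' strategy can fail, since an adversary may first present many left nodes sharing neighbours with $x$ and thereby exhaust $N(x)$. I would counter this by (i) serving $(x,k)$ at the \emph{lowest} available level $k'\le k$ that still has a free neighbour of $x$, and reusing an earlier match for $x$ whenever one is short enough, so that level-$(n,j)$ nodes are consumed mostly by level-$j$ requests; and (ii) arguing, from the unique-neighbour form of lossless expansion (every $A$ with $|A|\le 2^k$ has $\ge(1-2\eps)D(n)\,|A|$ right nodes with a single neighbour in $A$) together with the bound $2^k$ on the number of length-$n$ requests that ever reach level $\le k$, that $x$ always retains a free neighbour across the $O(\log n)$ levels it may use. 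If the plain greedy bookkeeping does not close this, the fallbacks are to enlarge $D(n)$ until no left node has $D(n)$ co-neighbours inside a $2^k$-set (again an $O(\log n)$ additive cost in the target length), and to note that for the small values of $k$ where the adversary is strongest there are only $2^k$ requests, so a more global re-matching step is affordable there. Since each step of the resulting strategy only inspects the $\poly(n)$ neighbours of the current node together with a bounded amount of bookkeeping, the matching strategy is itself polynomial-time, which is exactly what Theorem~\ref{th3} needs in order to derive Theorem~\ref{t:polytimelist}.
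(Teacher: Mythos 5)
Your overall architecture (layer the left nodes of $\zo^n$ against disjoint explicit expanders $E_{n,k}$ for $k\le n$, one per target length, and tag right nodes so that length alone encodes the level) matches the paper, and your choice of explicit expanders (GUV or Parvaresh--Vardy lossless condensers instead of the paper's Ta-Shma--Umans--Zuckerman dispersers) is if anything stronger than what is needed. The gap is in the matching strategy, and it is a real one, not just a bookkeeping detail you could patch later.

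You propose to serve a request $(x,k)$ at the \emph{lowest} level $k'\le k$ at which $x$ still has a free neighbour. This fails against a simple adversary. Level $0$ has only $\poly(n)$ right nodes. Send $2^{n-1}$ requests of the form $(y_i,n-1)$ with distinct $y_i$; under your rule these greedily fill level $0$, then level $1$, and so on, so after a polynomial number of them level $0$ is completely exhausted. Now issue one request $(x,0)$ for a fresh $x$. It must be matched to a right node of length $0+O(\log n)$, i.e.\ at level $0$, but that level is gone, and ``reusing an earlier match'' does not apply since $x$ never appeared before. The constraint ``at most $2^k$ requests with second component $k$'' is respected throughout, so this is a legal play and the strategy loses. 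The intuition ``level-$(n,j)$ nodes are consumed mostly by level-$j$ requests'' is exactly what this adversary destroys, and neither the unique-neighbour property of lossless expanders nor enlarging $D(n)$ by a polynomial factor repairs it, because level $0$ stays polynomially small relative to $2^{n-1}$.

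The paper's strategy is the reverse of yours, and the direction is essential. A request $(x,k)$ is first offered to the layer sitting at level $k$ (the \emph{longest} admissible length, not the shortest). The key lemma is that any $(2^{k-1},2^{k+1})$-expander (four disjoint copies of a $(2^{k-1},2^{k-1})$-expander suffice) supports on-line greedy matching of up to $2^{k+1}$ requests with at most $2^{k-1}$ rejections; rejected requests cascade one level \emph{down}. A downward induction on $k$ then shows that level $k$ ever sees at most $2^{k+1}$ requests ($\le 2^{k}$ fresh plus $\le 2^{k}$ cascaded from level $k+1$), so every request is eventually absorbed within its admissible length window. This invariant is exactly what your lowest-level rule breaks, because your rule lets high-$k$ requests raid the small low levels before the low-$k$ requests arrive. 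To close the proof you need both the quantitative rejection lemma and the top-down cascading order; the expander source is the easy part.
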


\begin{theorem}[Combinatorial version of Theorem~\ref{th:quadraticlistt}]\label{th1b}
In every graph $G$ with $L=\zo^n$ 
that has off-line matching with overhead $c$, 
the maximal degree of left nodes is $\Omega(n^2/c^2)$. 
\end{theorem}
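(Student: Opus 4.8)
The plan is to extract from the hypothesis a Hall-type expansion condition and then to show that small left-degree manufactures a violation of it. For the first step: applying the matching property to the request set $\{(x,k):x\in A\}$, where $A\subseteq\zo^n$ with $|A|=2^k$ (its per-level budget is met with equality), gives injectivity of $p(\cdot,k)$ on $A$ together with $|p(x,k)|\le k+c$, so the set $N_{\le k+c}(A)$ of neighbours of $A$ that are strings of length $\le k+c$ has at least $2^k$ elements. More generally, applying the matching property to an arbitrary valid assignment of deadlines $d_x$ to a subset $A$ and looking at the induced injection yields, for every $B\subseteq A$, $\bigl|\bigcup_{x\in B}\{p\in N(x):|p|\le d_x+c\}\bigr|\ge|B|$. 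Thus ``off-line matching with overhead $c$'' forbids any ``Hall blocker'', and the task is to show that a graph of small left-degree must contain one.

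Second step: assume every left node has fewer than $D$ neighbours. The length-$\le m$ neighbourhood of a left node is a set of fewer than $D$ strings of length $\le m$, and there are fewer than $2^{m+1}$ such strings; by a covering estimate the $2^n$ left nodes' length-$\le m$ neighbourhoods can be covered by roughly $2^{(c+1)D}$ sets of size $2^{m-c}-1$, so some such set $T$ contains the length-$\le m$ neighbourhood of at least $2^{\,n-(c+1)D}$ left nodes. If $D$ is below about $(n-m)/(c+1)$ this count is $\ge 2^{m-c}$, and requesting $2^{m-c}$ of those nodes at level $m-c$ forces $|N_{\le m}(\cdot)|\ge 2^{m-c}>|T|$, a contradiction. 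Optimising the scale $m$ in this single-scale argument already gives a linear bound $D\ge\Omega(n/c)$ — the combinatorial analogue of the linear lower bound that also follows from $\C_U(\C_U(x)\mid x)>\log|x|-O(1)$.

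The quadratic bound must come from combining the pressure of $\Theta(n/c)$ distinct scales $m_0<m_1<\dots$ at once, using the \emph{full} off-line matching property rather than its per-level shadow: one wants a single request set in which, for each scale $m_i$, a suitably ``crowded'' batch of left nodes is requested at level $m_i-c$ within the $2^k$-per-level budget, in such a way that either the matching fails outright or every surviving left node is forced to acquire $\approx n/c$ neighbours in the length-band $(m_{i-1},m_i]$ — and, the bands being disjoint, a total degree $\Omega((n/c)^2)$. I expect the main obstacle to be making these per-scale demands genuinely additive: arranging one request set in which serving the lower scales does not already ``pay for'' the higher ones while all budget constraints are respected simultaneously. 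The surrounding expander/covering estimates and the Hall translation should be routine by comparison.
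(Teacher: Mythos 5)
Your first two steps are essentially correct and reproduce the paper's single-scale ingredient (the paper's Lemma~\ref{l1}): by a probabilistic/counting argument, if all left degrees are below roughly $(n-m)/(c+O(1))$ then some small set $T$ of hash-values of length about $m$ captures the whole relevant neighbourhood of more than $2^{m-c}$ left nodes, contradicting off-line matching at level $m-c$. So the linear bound is fine.

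The gap is precisely where you say you expect it: you do not have a working way to make the $\Theta(n/c)$ scales ``genuinely additive,'' and the plan you sketch does not obviously repair it. Combining scales via one big request set runs into the fact that the matching property only requires $p(x_1,k_1)\ne p(x_2,k_2)$ when $x_1\ne x_2$; requesting the same $x$ at several levels $m_i-c$ can be answered by the same short hash-value every time, and requesting disjoint batches of nodes at different levels gives no control over which node ends up with many neighbours in which band. Nothing in the definition forces a surviving node's matches to land in $(m_{i-1},m_i]$, so the step ``every surviving node acquires $\approx n/c$ neighbours in each band'' is not justified. In short, the additive-pressure idea is still a heuristic, and the one place you flag as ``the main obstacle'' is exactly where a proof is required.

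The paper sidesteps this entirely by strengthening the per-scale conclusion instead of orchestrating a joint request. Fix a band $[k-1,k+c]$ with $n/4 < k \le n/2$; the induced graph $G[k-1,k+c]$ is a $(2^k,2^{k-1}+1)$-expander with fewer than $2^{k+c+1}$ right nodes. Apply the degree lemma not to all of $\zo^n$ but to the $2^{3n/4}$ left nodes of \emph{smallest} degree in that induced graph; since removing left nodes preserves expansion, the lemma forces one of \emph{those} to have degree $\ge n/(4(c+3))$ there, hence \emph{all but} $2^{3n/4}$ left nodes have at least that many edges into the band. Now take $\ell\approx n/(4(c+2))$ values of $k$ spaced $c+2$ apart so the bands are disjoint. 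Each scale excludes at most $2^{3n/4}$ nodes, and $\ell\cdot 2^{3n/4} < 2^n$, so by a union bound some single node is good at every scale; its degree is at least $\ell\cdot n/(4(c+3)) = \Omega(n^2/(c+O(1))^2)$. No joint request set is needed — the per-scale (``per-level shadow'') information, upgraded from ``some node'' to ``almost every node,'' already combines by union bound across the disjoint length-bands.
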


Theorems~\ref{th1a}, \ref{th2}  and~\ref{th1b} imply Theorems~\ref{t:complist},
\ref{t:polytimelist} and~\ref{th:quadraticlistt}, respectively. 
%Moreover, Theorem~\ref{th1a} implies also the second item in Theorem~\ref{th:computeSetsShortest}.

Our on-line matching strategy used in the proofs of Theorems~\ref{th1a} and~\ref{th2} are very simple:
receiving a new request of the form $(x,k)$ we just find the maximal $i\le k+c$ such that 
there is a free (\ie, not yet used in a matching)  right neighbor of $x$ of length $i$ and match  $x$ with the first such neighbor of $x$.
%Thus our on-line matching algorithms are quite effective and might be used 
%beyond applications we are aware about.

\section{The upper bounds}
\label{s:ub}
In this section we prove Theorem~\ref{th1a}, Theorem~\ref{th:computeSetsShortest}(2) and Theorem~\ref{th2}. Essentially, the proofs consist in  the construction of the corresponding graphs that have on-line matching with overhead $c(n)$. 
These are infinite graphs which are obtained as the union of finite bipartite graphs with $L = \zo^n$ and $R=\zo^{k+c(n)}$. For such graphs in which the length of the right nodes is fixed we do not need to refer to the
length constraint in a matching request as in Definition~\ref{overhead}, and we can work with the following simpler definition.

\begin{definition}\label{def2}
A  bipartite graph has \emph{matching up to $K$ with at most $M$ rejections},
if for any set of left nodes of size at most $K$ we can drop at most $M$ of its elements 
so that there is a matching in the graph for the set of remaining nodes.
A graph has  an \emph{on-line matching up to $K$ with at most $M$ rejections} 
if we can do this in on-line fashion.
For $M=0$ we say that the graph has \emph{matching up to $K$}.
\end{definition}
The connection between graphs with (on-line) matching up to $K$ (Definition~\ref{def2})  and graphs
with (on-line)  matching with overhead $c$ (Definition~\ref{overhead}) is the following. %bruno: added (on-line) for consistancy in sentence before
If a graph $G$ has  (on-line) matching with overhead $c$ then removing from $G$ all left nodes of length different from $n$ and 
all right nodes of length more than $k+c(n)$ we obtain a graph with (on-line) matching up to $2^k$. 
On the other hand, assume that, for some  $n$, for all $k<n$,  we have a graph $G_{n,k}$ with $L=\zo^n$  and $R=\zo^{k+c(n)}$ which
has (on-line) matching up to $2^k$. Then the union $G_n$ of $G_{n,k}$ over all $k<n$ has (on-line)
matching with overhead $c$, provided  all requests $(x,k)$ satisfy $k<|x|$. At the expense of increasing
the degree and $c$ by 1, the graph $G_n$ can be easily modified to have (on-line)
matching with overhead $c$ unconditionally: append 0 to all right nodes of $G_n$ and for every 
$x\in\zo^n$ add a new right node $x1$ connected to $x$ only. 

Thus, we will need to construct finite graphs as in Definition~\ref{def2}. This will be done by constructing a certain type of expander graphs.
\begin{definition}\label{d:expander}
A bipartite graph is called a \emph{$(K,K')$-expander},
if every set of $K$ left nodes has at least $K'$ distinct neighbors. 
\end{definition}
Graphs that have off-line matching up to $K$ are closely related to $(K,K)$-expanders. Indeed,
any graph having  matching up to $K$ is obviously a  $(K',K')$-expander
for all $K'\le K$. Conversely, by Hall's theorem~\cite{hall},  any graph which is a  $(K',K')$-expander
for all $K'\le K$ has  matching up to $K$.

In Musatov et al.'s~\cite{mus-rom-she:j:muchnik} it was shown that a reduction from expanders to \emph{on-line}
matching is also possible. More specifically,
every family of $(2^i,2^i)$-expanders, one for each  $i<k$, sharing the same set $L$ of left 
nodes can be converted into a graph with the same set $L$ of left nodes that has on-line
matching up to $2^k$, at the expense of multiplying the degree by $k$ and increasing hash-values by 1.  
(We will present the construction in the proof of Theorem~\ref{th7}.)

In Musatov et al.'s paper~\cite{mus-rom-she:j:muchnik} it is observed that every $(K,K)$-expander
has on-line matching up to $2K$ with at most $K$ rejections. We need a slight generalization of  this fact.

\begin{lemma}
Every $(M,K-M)$-expander
has on-line matching up to $K$ with at most $M$ rejections.
\end{lemma}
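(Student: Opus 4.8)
The plan is to run the obvious greedy on-line strategy and then bound the number of rejections using the expander condition. Concretely, process the left nodes of the given set one at a time; when a left node $x$ arrives, if $x$ has at least one neighbour that has not yet been used as a match, assign $x$ to (say) the first such free neighbour, and otherwise reject $x$. This is manifestly an on-line strategy, assignments are never changed, and by construction distinct matched left nodes receive distinct right nodes. It remains only to show that, on any request set of size at most $2^k$, at most $2^{k-\ell}$ nodes are rejected.

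First I would set up the counting. Let $A$ be the set of left nodes that end up matched, let $T$ be the set of rejected ones, and put $r=|T|$. Every processed node is either matched or rejected and the two sets are disjoint, so $|A|+r\le 2^k$, hence $|A|\le 2^k-r$; since the matching is injective, the set $Q$ of right nodes used as matches at the end of the run satisfies $|Q|=|A|\le 2^k-r$. The key observation is monotonicity: once a right node is used it stays used, so $Q$ only grows over time; therefore, for every $x\in T$, all neighbours of $x$ — which were all already in use at the moment $x$ was rejected — lie in the final set $Q$. Consequently $N(T)\subseteq Q$, so $|N(T)|\le 2^k-r$.

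Now I would invoke the expander hypothesis. If $r<2^{k-\ell}$ we are done, so suppose $r\ge 2^{k-\ell}$ and pick a subset $T_0\subseteq T$ with $|T_0|=2^{k-\ell}$. Since $G$ is a $(2^{k-\ell},2^k-2^{k-\ell})$-expander, $|N(T_0)|\ge 2^k-2^{k-\ell}$, and as $N(T_0)\subseteq N(T)\subseteq Q$ we get $2^k-2^{k-\ell}\le |N(T)|\le 2^k-r$, i.e. $r\le 2^{k-\ell}$. Combined with $r\ge 2^{k-\ell}$ this forces $r=2^{k-\ell}$; in all cases $r\le 2^{k-\ell}$, which is the claim.

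The only mildly delicate point I anticipate is the monotonicity bookkeeping: making sure that ``$x$ had no free neighbour at the time of rejection'' really yields $N(\{x\})\subseteq Q$ for the \emph{final} set $Q$ of used right nodes, which rests on assignments being irrevocable. Everything else is a single application of the expander inequality, and the argument specializes to the $\ell=1$ statement of~\cite{mus-rom-she:j:muchnik} as a sanity check.
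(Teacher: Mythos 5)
Your proof is correct and takes essentially the same approach as the paper: both run the obvious greedy on-line strategy, observe that every neighbour of a rejected node must be in use, and invoke the expander condition on a size-$2^{k-\ell}$ subset of the rejected nodes to force the bound. The paper packages the final count as a contradiction (``we have received more than $2^k$ requests'') while you derive $r\le 2^{k-\ell}$ directly from $|N(T_0)|\le|Q|\le 2^k-r$, but this is a purely cosmetic difference.
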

\begin{proof}
Use the following greedy  strategy
for on-line matching: each time a left vertex is received, check if it has a neighbor 
that was not used yet. If yes, any such neighbor is selected as the match for that node.
Otherwise, the node is rejected.  

For the sake of contradiction, assume that the number of rejected nodes is more
than $M$. Choose from them exactly $M$ rejected nodes. 
By the expansion property, they have at least $K-M$ neighbors and 
all those neighbors are used by the greedy strategy (otherwise the node having a non-used neighbor
would not be rejected). Thus we have at least $K-M$ matched left nodes
and more than $M$ rejected nodes. Consequently, we have received more than $K$ requests, a contradiction.
\end{proof}

For Theorem~\ref{th1a} we will use non-explicit such expander graphs, for Theorem~\ref{th2} we will need
explicit such graphs, which we obtain from the disperser of Ta-Shma, Umans, and Vadhan~\cite{ats-uma-zuc:j:expanders}.
We say that a family of graphs $G_{n,k}$ is computable (respectively, computable in polynomial time) 
if given $n,k$ and a left node $x$ in $G_{n,k}$, we can compute (respectively, compute in polynomial time)
the list of all neighbors of $x$ in $G_{n,k}$.

\begin{theorem}\label{th7}
  Given a computable (respectively computable in polynomial time) family
  $G_{n,k}$ of $(2^k,2^k)$-expanders with $L=\zo^n$, $R=\zo^{k+c(n)}$ and the degree of all left nodes is at most $D(n)$, 
  we can construct a computable graph $G$
  with $L=\zo^*$ that has a \ver{computable (respectively computable in polynomial time)} on-line matching with overhead $c(n)+\bigO{\log n}$
  and the degree of each left node is $\bigO{D(n)n}$.  
  %The matching strategy for $G$ is computable.
  %Moreover, if given $n,k$ and a left node $x$ in $G_{n,k}$ we can find the list 
  %of all neighbors of $x$ in time $\poly(n)$ then $G$ is polynomial time computable.
\end{theorem}
\begin{proof}
The main tool is borrowed from the paper of Musatov et al.~\cite{mus-rom-she:j:muchnik}:
all the graphs $G_{n,k}$ share the same set of left nodes while their sets of right nodes
are disjoint. Let $H_{n,k}$ denote the union of $G_{n,i}$ over all $i<k$.
Then  $H_{n,k}$ has on-line matching up to $2^k$ (without
rejections).  Indeed, each input left node is first given to the matching
algorithm for $G_{n,k-1}$ (that has on-line matching up to $2^k$ with at most $2^{k-1}$ rejections)
and, if rejected is given to the matching algorithm for 
$G_{n,k-2}$ and so on. 

Using this construction we can prove the theorem with slightly worse parameters as claimed.
To this end identify right nodes of the graph $H_{n,k}$ with strings of length $k+c(n)$
(the number of right nodes of  $H_{n,k}$ does not exceed the sum of geometrical series $2^{k+c(n)-1}+2^{k+c(n)-2}+\dots<2^{k+c(n)}$).
The left degree of $H_{n,k}$ is $D(n)k$. 

Recall the connection between matching up to $2^k$ and matching with overhead (the  paragraph
after Definition~\ref{def2}).
We see that the family $H_{n,k}$ can be converted into 
a graph $H_n$ with $L=\zo^n$ and degree $\ver{D'(n)=}D(n)n(n-1)/2+1$ having on-line matching  with overhead $c(n)+1$.
Finally, we prefix the  right nodes of $H_n$  with the  $\bigO{\log n}$-bit prefix-free code of the number $n$
and consider the union of all $H_n$. The resulting graph has on-line matching 
with overhead $c(n)+\bigO{\log n}$, its set of left nodes is $\zo^*$ and the degree
of every left node of length $n$ is $\bigO{D(n}n^2)$.

Now we will explain how to reduce the degree to $\bigO{D(n)n}$.
Consider four copies of $G_{n,k-1}$  with the same set $L$ of left nodes 
and disjoint sets of right nodes (say append 00 to every right node to get the first 
copy, 01 to get the second copy and so on). Their union is a 
$(2^{k-1},2^{k+1})$-expander, and hence has matching up to $2^{k+1}$ with at most $2^{k-1}$ rejections.\footnote{One can also
consider the union of $G_{n,k-1}$ and $G_{n,k}$, which also   has matching up to $2^{k+1}$ with at most $2^{k-1}$ rejections.
}
Its left degree is $4D(n)$ and the length of right nodes is $k+c(n)+2$.
Replace in the above construction of $H_n$ the graph $H_{n,k}$ by this graph. 
Thus the left degree of $H_n$ becomes $\bigO{D(n)n}$ in place of $\bigO{D(n)n^2}$. 
It remains to show that (the union of all graphs) $H_n$ has still on-line matching with overhead  $c(n)+\bigO{\log n}$ 

Again the matching strategy is greedy. Once we receive a request $(x,k)$ with $|x|=n$, we match 
$x$ to $1x$ if $k\ge n$. Otherwise,  we pass $x$ to the matching algorithm
in  $H_{n,k}$. 
If the algorithm rejects $x$, we pass $x$  to the matching
algorithm in $H_{n,k-1}$ and so on. 
We claim that we eventually find a match in one of 
the graphs $H_{n,i}$ for $i\le k$.
To prove the claim  it suffices to show that the 
matching algorithm for $H_{n,k}$ receives 
at most $2^{k+1}$ input strings. This is proved by a downward induction on $k$ (for any fixed $n$).
For the base case,  $k=n-1$,  this is obvious: we try to match 
in $H_{n,n-1}$ up to $2^{n-1}$ strings.
The induction step: by induction hypothesis the matching algorithm for 
$H_{n,k+1}$ receives at most $2^{k+2}$ input strings an thus rejects at most $2^{k}$ of them.
The matching algorithm  for $H_{n,k}$ thus receives at most $2^{k}$ rejected strings and at most $2^{k}$ new ones,
coming from requests of the form $(x,k)$. 
\end{proof}

\subsection{Proof of Theorem~\ref{th1a}}
\label{s:complist}

A weaker form of Theorem~\ref{th1a} can be derived from Theorem~\ref{th7}
and the following lemma of Muchnik~\cite{muc:j:condcomp}.

\begin{lemma} \label{l:onlineMatching}
For all $n$ and $k<n$, there exists a  $(2^{k},2^{k})$-expander with $L=\zo^n$,  $R=\zo^{k+2}$  and  
all left nodes have degree at most $n+1$.
\end{lemma}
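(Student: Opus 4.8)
The plan is to use the probabilistic method. Build a random bipartite graph $G$ with $L = \zo^n$ and $R = \zo^{k+2}$ as follows: for each left node $x$ independently, choose $n+1$ right nodes $r_1(x),\dots,r_{n+1}(x)$ uniformly and independently from $R$ (repetitions allowed), and let the neighbors of $x$ be exactly these strings. Then every left node has degree at most $n+1$ (fewer if some of the $r_j(x)$ coincide, which only helps), so it remains to show that with positive probability $G$ is a $(2^k,2^k)$-expander, i.e.\ that there is no ``bad'' set $S\subseteq L$ with $|S|=2^k$ and $|N(S)|\le 2^k-1$.

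The computation is a union bound over bad configurations. A set $S$ of size $2^k$ can be bad only if $N(S)\subseteq T$ for some $T\subseteq R$ with $|T| = 2^k-1$. For a fixed such pair $(S,T)$, the event $N(S)\subseteq T$ requires each of the $(n+1)\cdot 2^k$ independent random choices $r_j(x)$, $x\in S$, to land in $T$, which happens with probability $(|T|/|R|)^{(n+1)2^k} < (2^k/2^{k+2})^{(n+1)2^k} = 2^{-2(n+1)2^k}$. There are at most $\binom{2^n}{2^k}\le 2^{n2^k}$ choices of $S$ and at most $\binom{2^{k+2}}{2^k}\le 2^{(k+2)2^k}$ choices of $T$, so the probability that some bad $S$ exists is at most
\[
2^{n2^k}\cdot 2^{(k+2)2^k}\cdot 2^{-2(n+1)2^k} \;=\; 2^{\,2^k\,(n+k+2-2n-2)}\;=\;2^{\,2^k(k-n)}\;\le\; 2^{-2^k}\;<\;1,
\]
where the last inequality uses $k<n$, i.e.\ $k-n\le -1$. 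Hence some graph $G$ has the required expansion, which proves the lemma.

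The only delicate point is the balancing of the three exponential factors against the left degree and the number of right nodes: the degree $n+1$ and the choice $|R|=2^{k+2}$ are exactly what is needed to make the exponent $2^k(k-n)$ negative. In particular the ``$+2$'' in $k+2$ is essential — with $R=\zo^{k+1}$ the analogous exponent would be $2^k\cdot k>0$ and the union bound would fail — so I would not try to economize there. No further obstacle is expected; a derandomized or explicit construction is not needed since the statement only asserts existence.
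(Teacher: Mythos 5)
Your proposal is correct and uses essentially the same argument as the paper: both pick $n+1$ independent uniform neighbors per left node and apply a union bound over bad pairs $(S,T)$ with $|S|=2^k$, $|T|=2^k-1$, arriving at a failure probability of $2^{-2^k}<1$. Your side remark that the ``$+2$'' in $R=\zo^{k+2}$ is exactly what makes the exponent go negative is also accurate and matches the role it plays in the paper's estimate.
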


\begin{proof}
We use the probabilistic method, and for each left node we choose its $n+1$ neighbors at random:
all neighbors of each node are selected independently 
among all $2^{k+2}$ right nodes with uniform distribution,
and the choices for different left nodes
are independent too.   
We show that  the expansion property is satisfied with positive probability. Hence there exists at least one
such graph. To estimate the probability that the property is not satisfied,
consider a pair of sets  
$L'$ and $R'$ of left and right nodes, respectively, of sizes $2^{k},2^{k}-1$.
The probability that the neighbors of all nodes in $L'$ belong to $R'$ is upper-bounded by $(1/4)^{(n+1)2^k}$.
The total probability that the expansion condition is not satisfied, is obtained by summing over all such $L',R'$, i.e.
\[
\begin{array}{ll}
\left(\frac{1}{4}\right)^{(n+1)2^{k}} (2^n)^{2^{k}} (2^{k+2})^{2^{k}-1} &\le
\left(\frac{2^n2^{k+2}}{4^{n+1}}\right)^{2^{k}} \\
& \le  \left(\frac{2^n2^{n+1}}{4^{n+1}}\right)^{2^{k}} \\
& =
\left(\frac{1}{2}\right)^{2^{k}} \\
& <1.
\end{array}
\]
\renewcommand{\qed}{}
\end{proof}

\begin{remark}\label{remark:xxx}
By the very same construction we can obtain 
a graph with $L=\zo^n$,  $R = \zo^{k+2}$,  $D=n+1$
that is a $(t,t)$-expander \emph{for all} $t\le 2^{k}$.
Indeed, the probability that a random graph is not a $(t,t)$-expander 
is at most $\left(\frac{1}{2}\right)^{t}$ (we may replace $2^k$ by $t$ in the above 
formulas). By the union bound, the probability that this happens for some $t\le 2^k$ is at most
the sum of the  geometric series   $\sum_{t=1}^{2^k}\left(\frac{1}{2}\right)^{t}<1$.
By Hall's theorem, this graph has off-line matching up to $2^k$.
An interesting open question is whether there is a graph with the same parameters, \ie,  $L=\zo^n$,  $R = \zo^{k+\bigO{1}}$,
$D=\bigO{n}$, that has  \emph{on-line} matching up to $2^k$.
\end{remark}

From lemma~\ref{l:onlineMatching} and Theorem~\ref{th7} we obtain a computable 
graph with on-line matching with overhead  $\bigO{\log |x|}$, degree $\bigO{|x|^2}$ and $L=\zo^*$. 
We now need to replace the $\bigO{\log |x|}$ overhead by $\bigO{1}$. 
Recall that the $\bigO{\log |x|}$ appeared from the prefix code of
$|x|$ added to the hash values. To get rid of it we need a computable graph
$F_k$ in place of the previously used $G_{n,k}$ with the same parameters but with $L=\zo^{\geq k}$,
and not $L=\zo^n$.  Such a graph is constructed in the following lemma.

\begin{lemma}
\label{l:compositeMatching}
For every $k$ there is a computable bipartite graph $F_{k}$ with $L=\zo^{\ge k+3}$, $R=\zo^{k+12}$
that is a $(2^k,2^k)$-expander and  the degree of every left node $x$ is $|x|/4$.
\end{lemma}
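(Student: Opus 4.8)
We must build, for every $k$, a computable bipartite graph $F_k$ with left vertex set $\zo^{\ge k}$, right vertex set $\zo^{k+3}$, which is a $(2^k,2^k)$-expander, and in which every left node $x$ has degree $O(|x|)$. The point is that the left side must accommodate strings of *all* lengths $n \ge k$ simultaneously, so that when we later take the union over $k$ we do not need to encode $n$ into the hash values (which was the source of the unwanted $O(\log n)$ overhead).

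**Plan.** The natural approach is to build $F_k$ by gluing together, over all lengths $n \ge k$, graphs of the type produced by Lemma~\ref{l:onlineMatching} (the $(2^k,2^k)$-expanders with $L = \zo^n$, $R = \zo^{k+2}$, degree $n+1$), but we cannot just take a disjoint union of their right sides — that would blow up $R$ to more than $\zo^{k+2}$, and worse, an expander needs that *any* $2^k$ left nodes, possibly spread across many different lengths, have $2^k$ neighbors. So instead I would (i) fix, for each $n \ge k$, a computable $(2^k,2^k)$-expander $H_n$ with $L=\zo^n$, $R=\zo^{k+2}$, degree $n+1$ (Lemma~\ref{l:onlineMatching}; note the construction there is by the probabilistic method, but exhaustive search over all degree-$(n+1)$ graphs makes it computable uniformly in $n,k$); and (ii) combine the $H_n$'s using the extra bit of room we have gone from $\zo^{k+2}$ to $\zo^{k+3}$, i.e.\ one free bit, call it the "layer bit". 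The idea is to split the lengths $n \ge k$ into two classes according to the parity of $\lfloor \log(n-k+1)\rfloor$ or something similar and send one class to the right half $0\cdot\zo^{k+2}$ and the other to $1\cdot\zo^{k+2}$; but a cleaner and more robust device is the following: view $F_k$ as the union of $H_k$ (lengths exactly $k$, using right nodes with prefix $0$) and a graph $F_{k+1}'$ on lengths $>k$ — recursion in $k$. This gives a recursive/self-similar construction, which is exactly why the lemma is stated "for every $k$" and will be used inside Theorem~\ref{th7}-style arguments.

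**The key step.** Concretely I would set up the recursion: $F_k$ has right set $\zo^{k+3}$. Reserve the right nodes of the form $0\,r$ with $r \in \zo^{k+2}$ as a copy of $R$ for a fixed $(2^k,2^k)$-expander $H_k$ on $L = \zo^k$ of degree $k+1$ (Lemma~\ref{l:onlineMatching}). The remaining right nodes are those of the form $1\,r$, $r \in \zo^{k+2}$; there are $2^{k+2} = 4\cdot 2^k$ of them, which we regard as $2\cdot 2^{k+1} = 2^{k+3}/2$ many, i.e.\ "half" of a right set of size $2^{(k+1)+3}$. Wait — that is the crux of the arithmetic: $\zo^{k+3}$ has room for $\zo^{(k+1)+3}$ only up to a factor of two, so the recursion must contract the degree by a constant factor each step and use a union-of-two-copies / collapsing trick. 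I would instead use a prefix-free encoding of $n-k$ of length $2\lceil\log(n-k+1)\rceil+O(1)$ that is absorbed not into the hash length but into the *degree* — that is, a left node $x$ of length $n$ gets $O(n-k) = O(|x|)$ neighbors, one "block" of $H_n$-neighbors for each candidate, all landing in a common $R = \zo^{k+3}$, and we verify the expander property by a union bound exactly as in Lemma~\ref{l:onlineMatching} but now summing also over the (sparse) set of possible length-profiles of a $2^k$-subset of $\zo^{\ge k}$.

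**The main obstacle.** The hard part is precisely this last union bound: when the $2^k$ left nodes of a prospective "bad" set $L'$ can lie at arbitrarily many different lengths $n_1 < n_2 < \cdots$, the number of such $L'$ is no longer simply $\binom{2^n}{2^k}$, and the per-node collision probability depends on $n_i$, so the clean cancellation in Lemma~\ref{l:onlineMatching} must be redone carefully. The fix is to give a left node at length $n$ a degree like $c\,(n-k+4)$ for a suitable constant $c$: then the probability that all of $L'$ maps into a fixed $(2^k-1)$-element $R'$ is at most $\prod_i (2^{-2})^{c(n_i-k+4)} = 4^{-c\sum_i (n_i-k+4)}$, and one checks $\sum_i (n_i - k + 4) \ge \log(\text{number of ways to choose }L')+\log(\text{number of choices of }R')+2^k$ with room to spare, because choosing $2^k$ strings with $|x_i| = n_i$ costs at most $\sum n_i$ bits of entropy while the reserved slack $4\cdot 2^k$ and the $-k$ shifts pay for the $R'$ term and leave a geometric-series surplus. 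I would then note computability is immediate (finitely many candidate graphs at each "stage", found by brute force), and that summing the failure probabilities over all $L'$, all $R'$ gives something $<1$, hence a good $F_k$ exists; fixing the lexicographically first one makes the whole family $(F_k)_k$ uniformly computable, completing the proof.
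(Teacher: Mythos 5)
Your plan correctly identifies the crux — one must build a single expander that handles all lengths $n\ge k$ at once, and the dangerous part is the union bound over the "length profile" $(t_i)$ of a candidate bad set $L'$ — and, like the paper, you go for a direct probabilistic construction with degree growing in $|x|$ rather than the recursive gluing you briefly toy with. But the specific degree you propose, $c(n-k+4)$, does not work, and the inequality you "check" is false.

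Here is the problem. With right side $R=\zo^{k+3}$ and a bad set $R'$ of size $<2^k=|R|/8$, a single random neighbor avoids $R'$ with probability $\ge 7/8$, so a left node of degree $d$ has all its neighbors in $R'$ with probability $\le 8^{-d}=2^{-3d}$. Against this you must pay the entropy of choosing that left node, which at length $n$ is $\approx n$ bits (there are $2^n$ candidates). So per node you need roughly $3d \ge n + \Omega(1)$, i.e.\ $d = \Omega(n)$ — the degree must scale with $|x|$ itself, not with $|x|-k$. Your $d=c(n-k+4)$ gives only $4c$ at $n=k$, while a node there costs $k$ bits to specify; for $k>12c$ the union bound exceeds $1$ (and indeed one can have $2^k$ left nodes of constant degree whose edges concentrate on fewer than $2^k$ right nodes). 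Concretely, the inequality you assert, $\sum_i(n_i-k+4)\ge \log(\#L')+\log(\#R')+2^k$, is backwards: $\sum_i(n_i-k+4)=\sum_i n_i-(k-4)2^k < \sum_i n_i \approx \log(\#L')$ already, before even adding the $R'$ term. The paper instead takes degree $|x|+3$, for which the per-level factor is $2^{i t_i}\cdot 8^{-(i+3)t_i}=2^{-2it_i-9t_i}\le (2K)^{-2t_i}$ with $K=2^{k+3}$, and the whole sum over profiles, sets $L'$ and sets $R'$ collapses to $(1/4)^{2^k}<1$. A second omission: you never deal with lengths $n>2^{k+3}$. There the paper simply connects $x$ to \emph{all} $2^{k+3}$ right nodes; this trivially preserves expansion and the degree is $2^{k+3}\le |x|$, so still $O(|x|)$. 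Without this cap, "degree $\Theta(n)$" would eventually exceed $|R|$ and make no sense. So the route is the right one, but you need degree $\Theta(|x|)$ (not $\Theta(|x|-k)$) on lengths in $[k,2^{k+3}]$ and the all-of-$R$ cap beyond, and the union-bound arithmetic has to be redone accordingly.
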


\begin{proof} 
  We first build such a graph with left nodes being all strings of length between $k+3$ and $K=2^{k+2}$. 
This is again done by the probabilistic method: we choose $|x|/4$ neighbors of every node $x$ independently. 
Let  $L_i$ stand for all left nodes of
length $i$. For any $i \in k+3, \dots, K$, the probability that all elements  of a fixed $L' \subset L_i$ 
are mapped to a fixed set of size at most $2^k-1$ 
at the right is at most $\left(\frac{1}{2^{12}}\right)^{i|L'|/4}$.
The probability that some $t_i$ elements 
in $L_i$ are mapped into a fixed set of $2^k-1$ elements at the right is at most
\[
\begin{array}{ll}
2^{it_i} \left( \frac{1}{2^{12}} \right)^{it_i/4} & = \left( \frac{1}{2} \right)^{2it_i} \\
& \le \left( \frac{1}{2} \right)^{( k+3)2t_i}  \\
& = \left( \frac{1}{2K} \right)^{2t_i}.
\end{array}
\]
If $\sum_{i=k}^K t_i = t$ with $t = 2^k$, 
the probability that the union of neighbors of 
$t_k$ elements in $L_k$, $t_{k+1}$ elements in $L_{k+1}$, \dots , and $t_K$ elements in $L_K$ 
are mapped to a fixed set of size at most $2^k-1$ is bounded by
$\prod_i\left( \frac{1}{2K} \right)^{2t_i}=\left( \frac{1}{2K} \right)^{2t}$.
Multiplying by the number $K^{2^k-1} \le K^{t}$ of different right sets of size $K-1$, 
and multiplying by the upper bound  $K^t$ for the number of  solutions to the equation $\sum_{i=k}^K t_i = K$,
we find 
\[
\left( \frac{1}{2K} \right)^{2t} K^{t} \; K^{t} \le \left( \frac{1}{4} \right)^t < 1.
\]
Hence, the total probability to randomly generate a graph that is not an expander is strictly less than $1$. 
Therefore, a graph satisfying the conditions must exist, and can be found by exhaustive search. 

On the left side, we now need to add the strings of length larger than $K = 2^{k+2}$. 
These nodes are connected to all
the  nodes on the right side. Thus the degree of every such node $x$  is $2^{k} \le |x|/2^{2}$  
and we are done.
\end{proof}

\begin{remark}
By the very same construction we can obtain 
a graph with $L=\zo^{\ge k}$, $R=\zo^{k+12}$,  $D=\bigO{n}$
that is a $(t,t)$-expander \emph{for all} $t\le 2^{k}$
and thus has off-line matching up to $2^k$ 
%(replace $K$ by $t$ in the above arguments and then use the union bound over all $t\in\{k,\dots,2^k\}$).
(use the union bound over all $t\in\{k,\dots,K\}$).
An interesting open question is whether there is a graph with the same parameters that has  \emph{on-line} matching up to $2^k$.
\end{remark}

\begin{proof}[Proof of Theorem~\ref{th1a}.]
Appending all 2-bit strings to all the right nodes of the graph $F_{k-1}$ (and thus increasing the degree 4 times)
we obtain a $(2^{k-1},2^{k+1})$-expander $H_{k}$. The union of $H_{k}$ over all $k$ is a computable graph, whose 
left degree is $|x|(|x|-3) \le |x|^2-1$,
and the set of left nodes is $\zo^*$. 
For each left node $x$, we \ver{add} an additional node to handle requests of the form $(x,k)$ with $k > |x|-3$.
This graph has on-line matching with constant overhead.
This is proved by  downward induction, as in Theorem~\ref{th7}. Indeed, consider
the step when the $s$th request for matching arrives. 
By a downward induction on $k$ we can again prove that the number of 
matching requests in $H_k$ is at most $2^{k+1}$. Now the base of induction  
is the maximal $k$ for which
there has been at least one request for matching in $H_k$  among the $s$ requests so far.
We conclude that the $s$th request  is satisfied and, since this holds for every~$s$, we are done. 
\end{proof}

\ver{As we have seen, 
Theorem~\ref{th1a} implies Theorem~\ref{t:complist}.}

\ver{\subsection{Proof of Theorem~\ref{th:computeSetsShortest}(2)}%\label{rem1}
By Theorem~\ref{th1a} there is a computable graph $G$ having a computable on-line matching strategy
with constant overhead $c$ and degree $|x|^2$.

Fix any standard machine $U$.
In the proof of Theorem~\ref{th3} we have constructed a computable function 
$V$ such that $C_V(x)\le C_U(x)+c$ and  such that \emph{every} $V$-program of
every string $x$ is a neighbor of $x$ in the graph $G$.
If it happens that $V$ is a standard machine, then we are done: 
consider the list $\{p\mid p$ is a neighbor
of $x\}$.

%Thus for every $x$ the 0-shortest $V$-program of $x$ is a neighbor of $x$. 
%(It follows from the construction that every string  has a unique
%0-shortest $V$-program therefore we use ``the'' and not ``a''.) 

Otherwise, define the machine $U_1$
by letting $U_1(0p)=V(p)$ and $U_1(1^{c+2}p)=U(p)$.
The second equation guarantees that $U_1$ is
a standard machine. Both equations
imply that for every $x$ the
0-shortest $U_1$-program for $x$ has the form
$0p$ (recall that $C_V(x)\le C_U(x)+c$). Hence for all $x$
the 0-shortest
$U_1$-program $q$ for $x$ in the list $\{0p\mid p$ is a neighbor
of $x\}$.
%as indicated in the proof above. %Remark~\ref{rem1}.
}

\subsection{ Proof of  Theorem~\ref{th2}}

By Theorem~\ref{th7},   we have to construct for every $k\le n$
an explicit $(2^k,2^k)$-expander of left degree $\poly(n)$, with $2^n$ left nodes and $\poly(n)2^k$ right nodes.
(Recall that a graph is \emph{explicit} if there is an algorithm that on input $x \in \zo^n = L$ lists in $\poly(n)$ time  all the neighbors of $x$.)

The proof relies on the explicit disperser graphs of  Ta-Shma, Umans, and Zuckerman from Theorem~\ref{t:tuz} below.
\begin{definition}
A bipartite graph $G = (L,R,E)$ is a \emph{$(K, \delta)$-disperser}, if every subset  
$B \subseteq L$ with $|B| \geq K$  has at least $(1-\delta)|R|$ distinct neighbors.
\end{definition}
\begin{theorem}
\label{t:tuz} [Ta-Shma, Umans, Zuckerman~\cite{ats-uma-zuc:j:expanders}] 
For every $K,n$ and constant $\delta$,  
there exists an  explicit $(K, \delta)$-disperser
$G = (L=\zo^n, R= \zo^m, E \subseteq L \times R)$ in which every node in $L$ has degree 
$D=\poly(n)$ and \ver{$|R|=\frac{ \alpha K D}{n^3}$}, for some constant $\alpha$.
\end{theorem}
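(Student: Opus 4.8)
The plan is to reprove the Ta-Shma--Umans--Zuckerman disperser along the lines of their construction: build an explicit \emph{condenser} out of an error-correcting code of nearly optimal distance, compose it with itself a constant number of times to drive the min-entropy rate up to $1-o(1)$, and finish with an off-the-shelf high-rate extractor. First I would recast the statement in the language of min-entropy. Writing $k=\log K$, $d=\log D$, $m=\log|R|$, a $(K,\delta)$-disperser with the claimed parameters is exactly a map $\mathrm{Disp}\colon\zo^n\times\zo^d\to\zo^m$ with seed length $d=O(\log n)$, output length $m=k+d-3\log n+O(1)$, such that for every $B\subseteq\zo^n$ with $|B|\ge 2^k$ the image $\{\mathrm{Disp}(x,s): x\in B,\ s\in\zo^d\}$ covers at least a $(1-\delta)$-fraction of $\zo^m$. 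In particular the ``entropy loss'' $k+d-m$ is only $O(\log n)$; this is what makes the theorem nontrivial, and it is why we are allowed to aim for a disperser (constant error, logarithmic loss) rather than an extractor. The small-$k$ regime, where $m=O(\log n)$ or the statement is vacuous, is handled by a single code evaluation, so in general I assume $k$ grows with $n$.

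The core object is an explicit \emph{lossless-up-to-$O(\log n)$ condenser}. Fix an explicit code $C\colon\zo^{N}\to\Sigma^{\bar N}$ with alphabet $|\Sigma|=\poly(N)$, block length $\bar N=\poly(N)$, and relative minimum distance $1-\gamma$ for a small constant $\gamma$ (concatenated Reed--Solomon codes, or Reed--Muller codes over $\mathbb{F}_q^t$, give this explicitly). For a parameter $\ell$ put
\[
\mathrm{Con}(x,i)\;=\;\bigl(i,\ C(x)_i,\ C(x)_{i+1},\ \dots,\ C(x)_{i+\ell-1}\bigr),\qquad i\in[\bar N],
\]
indices taken cyclically (or along a line, in the multivariate picture). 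The seed length is $\log\bar N=O(\log N)$ and the output length is $\log\bar N+\ell\log|\Sigma|$. Losslessness comes from the distance bound: if $x\ne x'$ and $\mathrm{Con}(x,i)=\mathrm{Con}(x',i)$, then $C(x)$ and $C(x')$ agree on $\ell$ consecutive symbols starting at $i$, and since they agree on at most a $\gamma$-fraction of positions, there are at most $\gamma\bar N$ such ``bad'' seeds $i$. Hence for any $B$ with $|B|\ge 2^k$, the image of $B\times[\bar N]$ has size at least $(1-\gamma)|B|\bar N/\poly(N)$, so essentially no entropy is lost beyond an $O(\log N)$ additive term even though the dimension drops from $N$ to $\log\bar N+\ell\log|\Sigma|=O(\log N)+k(1+o(1))$ once we set $\ell\approx k/\log|\Sigma|$. (Equivalently, in the Reed--Muller picture: identify $x$ with a polynomial $p_x$ of bounded individual degree over $\mathbb{F}_q^t$, let the seed choose a line, and output the restriction of $p_x$ to that line together with the line; distinct polynomials agree on few lines by Schwartz--Zippel, giving the same conclusion.)

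Next I would compose. A condenser followed by a condenser is again a condenser, so applying $\mathrm{Con}$ an $O(1)$ number of times --- first to $\zo^n$, then to its $O(\log n)+k(1+o(1))$-bit output, and so on --- brings the min-entropy rate to $1-o(1)$ on a string of length $O(k+\log n)$, using total seed $O(\log n)$ and losing only $O(\log n)$ bits of entropy overall. On a source of rate $1-o(1)$ extraction is cheap: plug in any known explicit extractor for high min-entropy rate with constant error (via the Nisan--Zuckerman / Zuckerman high-rate extractor, or iterated block-source extraction), which has seed length $O(\log n)$ and entropy loss $O(1)$. Since condenser-then-extractor is a disperser, the composite yields $\mathrm{Disp}$ with seed $O(\log n)$, output $k+d-O(\log n)$, and constant error; choosing the constants (number of stages, the defect $\gamma$, the polynomial $\bar N$) makes the aggregate loss fit under $3\log n$ and produces $|R|=\alpha KD/n^3$ with $D=\poly(n)$, while explicitness of the code with polynomial block length keeps the whole graph explicit.

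The main obstacle is the lossless condenser together with the parameter bookkeeping around it. One must choose the code's distance defect $\gamma$ and the window length $\ell$ so that two demands hold at once: the output is long enough to carry $\approx k$ bits of entropy, and the collision count $\gamma\bar N$ is small enough that the loss per stage is $O(\log n)$ rather than $\poly(n)$ or a constant fraction of $k$. Pushing a code's relative distance close to $1$ while keeping alphabet and block length polynomial forces either concatenation tricks or a careful choice of $\mathbb{F}_q^t$, and the interaction between $t$, $q$, $\ell$ and the number of composition rounds is precisely where the exponent $3$ in $n^3$ is spent. Everything else --- that condensers compose, that condenser-then-extractor is a disperser, and that the cited high-rate extractor has the stated parameters --- is routine.
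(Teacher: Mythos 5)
First, a point of order: the paper does not prove Theorem~\ref{t:tuz} at all --- it is imported verbatim from Ta-Shma, Umans and Zuckerman and used as a black box --- so there is no internal proof to compare your argument against. Judged on its own terms, your sketch has the right overall shape (lossless condenser, constant number of compositions, then a high-min-entropy-rate extractor, and the observation that condense-then-extract gives a disperser), but the central step contains a genuine gap.

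The gap is the claim that the window map $\mathrm{Con}(x,i)=(i,C(x)_i,\dots,C(x)_{i+\ell-1})$ is lossless because $C$ has relative distance $1-\gamma$. Minimum distance only gives a \emph{pairwise} statement: for each fixed pair $x\ne x'$, at most $\gamma\bar N$ seeds collide. The condenser property you need is a statement about \emph{every} set $B$ of size $2^k$: the image of $B\times[\bar N]$ must have size close to $|B|\bar N$, equivalently no small set of outputs may absorb the edges of many inputs. The pairwise bound does not imply this. Summing it over pairs bounds the number of colliding triples $(x,x',i)$ by $|B|^2\gamma\bar N$, and the standard Cauchy--Schwarz count then yields only an image of size $\Omega(\bar N/\gamma)=\poly(n)$, a bound independent of $|B|$ and hence useless for $|B|=2^k$. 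Concretely, for Reed--Solomon (or any linear) code with $\ell\log|\Sigma|\approx k\ll\gamma\bar N$, the codewords agreeing with a fixed window of $\ell$ symbols form an affine space of codimension $\ell$ in the message space, so a single output $(i,w)$ can have $2^{N-\ell\log|\Sigma|}$ preimages, typically far more than $2^k$; distance alone cannot rule out massive entropy loss. What is actually required is a list-recovery/list-size bound --- for every position-wise small family of candidate symbols, only few codewords are consistent with it --- and obtaining such a bound with seed length $O(\log n)$ and loss $O(\log n)$ is precisely the technical content of the Ta-Shma--Umans--Zuckerman construction (and, later, of Guruswami--Umans--Vadhan via Parvaresh--Vardy codes); it is not parameter bookkeeping. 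Until that step is supplied, the composition argument and the final high-rate extractor, which are indeed routine, have nothing to stand on.
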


Given $n$ and $k$,  we  apply this theorem to $K=2^k$ and $\delta=1/2$.
We obtain a $(2^k, \frac{\alpha 2^k D}{2n^3})$-expander with degree $D=\poly(n)$, $L=\zo^n$  and
\ver{$|R|=\frac{\alpha K D}{n^3}$}.
Consider $t=\max\{1,\lceil \frac{2n^3}{\alpha D}\rceil\}$ disjoint copies of this graph and identify left
nodes of the resulting graphs (keeping their sets of right nodes disjoint). 
We get an explicit  $(2^k, 2^k)$-expander with $2^n$ left and $2^{k}\poly(n)$ right nodes
and degree $\poly(n)t=\poly(n)$. 
\smallskip

\ver{As we have seen, Theorem~\ref{th2} implies Theorem~\ref{t:polytimelist}.}

\section{The lower bounds}\label{s:lb}

\subsection{Proof of Theorem~\ref{th1b}}
Assume that $G$ has off-line matching with overhead  $c$.
Let $G[\ell,k]$ denote the induced graph that is obtained from $G$ 
by removing all right nodes of length more than $k$ and less than $\ell$.
The graph $G[0,k+c]$ is obviously a $(2^k,2^k)$-expander for every $k$.
As there are less than $2^{k-1}$ strings of length less than $k-1$,
it follows that the graph   $G[k-1,k+c]$ is a $(2^k,2^{k-1}+1)$-expander.

The next lemma, inspired by Kov\H{a}ri, S\`{o}s and Tur\`{a}n\cite{kovari1954} (see Radhakrishnan and Ta-Shma~\cite[Theorem 1.5]{rad-tas:j:extractors}),
shows that any such expander must have large degree.

\begin{lemma}\label{l1}
Assume that a bipartite graph with $2^\ell$ left nodes and $2^{k+c}$ right nodes 
is a $(2^k,2^{k-1}+1)$-expander. Then there is a left 
node in the graph with degree more than $D=\min\{2^{k-2}, (\ell-k)/(c+2)\}$.
\end{lemma}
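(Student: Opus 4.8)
My plan is a Zarankiewicz-type double-counting argument (the one alluded to in the sentence preceding the lemma), in which the expander hypothesis enters only through its contrapositive: if $T\subseteq R$ is any set of at most $2^{k-1}$ right nodes, then strictly fewer than $2^k$ left nodes have all of their neighbours inside $T$, since otherwise those $2^k$ left nodes would have at most $2^{k-1}<2^{k-1}+1$ distinct neighbours, contradicting the $(2^k,2^{k-1}+1)$-expansion. I would argue by contradiction, assuming that every left node has degree at most $D=\min\{2^{k-2},(\ell-k)/(c+2)\}$; since degrees are integers this means $\le\lfloor D\rfloor$, and we may assume $D\ge 1$ (otherwise the statement is trivially true, as not all of the $\ge 2^k$ left nodes can have degree $0$).

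Next I would count the set $P$ of pairs $(x,T)$ where $x$ is a left node, $T\subseteq R$ has size exactly $2^{k-1}$, and $N(x)\subseteq T$. Fixing $T$, the contrapositive above gives at most $2^k-1$ admissible $x$, so $|P|\le (2^k-1)\binom{2^{k+c}}{2^{k-1}}$. Fixing $x$, the admissible $T$ are exactly $N(x)$ together with an arbitrary $(2^{k-1}-\deg x)$-element subset of the $2^{k+c}-\deg x$ non-neighbours of $x$; since $\binom{2^{k+c}-d}{2^{k-1}-d}$ is non-increasing in $d$ for $0\le d\le 2^{k-1}$ and $\deg x\le D\le 2^{k-2}$, this count is at least $\binom{2^{k+c}-D}{2^{k-1}-D}$, whence $|P|\ge 2^\ell\binom{2^{k+c}-D}{2^{k-1}-D}$.

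Dividing the two bounds gives $2^\ell<2^k\cdot\binom{2^{k+c}}{2^{k-1}}\big/\binom{2^{k+c}-D}{2^{k-1}-D}=2^k\prod_{i=0}^{D-1}\frac{2^{k+c}-i}{2^{k-1}-i}$. Each factor $\frac{2^{k+c}-i}{2^{k-1}-i}$ is increasing in $i$ (as $2^{k+c}>2^{k-1}$), hence bounded by its value at $i=D-1$, which is at most $\frac{2^{k+c}}{2^{k-1}-D+1}\le\frac{2^{k+c}}{2^{k-2}}=2^{c+2}$, the last inequality using $D\le 2^{k-2}$. Therefore $2^\ell<2^k\cdot 2^{(c+2)D}$, i.e. $\ell<k+(c+2)D$, i.e. $D>(\ell-k)/(c+2)$, which contradicts $D\le(\ell-k)/(c+2)$. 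Hence some left node must have degree greater than $D$.

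The step I expect to need the most care is the final estimate, and it is also where the strength of the lemma comes from: taking the test-set size to be exactly $2^{k-1}$ together with the restriction $D\le 2^{k-2}$ is precisely what keeps every denominator $2^{k-1}-i$ at least $2^{k-2}$, so that each of the $D$ telescoped factors costs only $2^{c+2}$ rather than the trivial $2^{k+c}$; this produces the exponent $(c+2)D$ — with dependence on $c$ alone, not on $k$ — instead of the much weaker $(k+c)D$ that a crude bound such as $\binom{2^{k+c}}{D}\le 2^{(k+c)D}$ would yield.
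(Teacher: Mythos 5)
Your proof is correct and is essentially the same argument as the paper's. The paper phrases the existence step probabilistically (choose a uniformly random set $B$ of $2^{k-1}$ right nodes and compute the expected number of left nodes with $N(x)\subseteq B$), whereas you phrase it as a Zarankiewicz-type double count of pairs $(x,T)$; these are the same computation, since the expectation is exactly $|P|/\binom{2^{k+c}}{2^{k-1}}$, and the binomial-ratio estimate $\binom{2^{k+c}}{2^{k-1}}\big/\binom{2^{k+c}-D}{2^{k-1}-D}\le 2^{(c+2)D}$ you derive is the same as the paper's factor-by-factor bound $\frac{2^{k-1}-D+1}{2^{k+c}-D+1}\ge 2^{-c-2}$. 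The only cosmetic looseness (shared with the paper) is that $D$ need not be an integer, so the product and binomial coefficient should really use $\lfloor D\rfloor$; you flag this yourself and it does not affect the final estimate since $\lfloor D\rfloor\le D$.
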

\begin{proof}
For the sake of contradiction assume that all left nodes have degree
at most $D$ (and without loss of generality we may assume that all degrees are exactly $D$). 
We need to find a set of right nodes $B$ of size $2^{k-1}$ 
and $2^k$ left nodes all of whose neighbors lie in $B$.
The set $B$ is constructed via a probabilistic construction. Namely, choose $B$ at random (all $\binom{2^{k+c}}{2^{k-1}}$
sets have equal probabilities). The probability that a fixed neighbor of a fixed left node is in $B$
is equal to 
$$
\frac{\binom{2^{k+c}-D}{2^{k-1}-D}}{\binom{2^{k+c}}{2^{k-1}}}=
\frac{2^{k-1}(2^{k-1}-1)\cdots(2^{k-1}-D+1)}{2^{k+c}(2^{k+c}-1)\cdots(2^{k+c}-D+1)}.
$$
Both products in the numerator and denominator have $D$ factors and the ratio of corresponding factors is
at least 
$$
\frac{2^{k-1}-D+1}{2^{k+c}-D+1}\ge 2^{-c-2}
$$ 
(the last inequality is due to the assumption $D\le 2^{k-2}$).
Thus the probability that all neighbors of a fixed left node are in $B$ is at least 
$2^{-D(c+2)}$. Hence the average number of left nodes having this property is at least 
$2^{\ell-D(c+2)}$, which is greater than or equal to $2^k$ by the choice of $D$. 
%Stated otherwise, the average number 
%of nodes whose neighborhood is included in a randomly chosen $B$ is at least $2^k$.
Hence there is $B$ that includes neighborhoods of at least $2^k$ left nodes,
a contradiction. 
\end{proof}

This lemma states that at least one left node has large degree. However, it implies more: 
if the number of left nodes is much larger than $2^\ell$, then  
almost all left nodes must have large degree. Indeed, assume that a bipartite graph with 
$2^{k+c}$ right nodes is a $(2^k,2^{k-1}+1)$-expander.
Choose $2^\ell$ left nodes with smallest degree and apply the lemma
to the resulting induced graph (which is also  a $(2^k,2^{k-1}+1)$-expander).
By the lemma, in the original graph all except for less than $2^\ell$ nodes have degree 
more than $D=\min\{2^{k-2}, (\ell-k)/(c+2)\}$. 

\begin{proof}[Proof of \ver{Theorem}~\ref{th1b}.]
Choose $n/4<k\le n/2$. As noticed,  
the graph $G[k-1,k+c]$ is a $(2^k,2^{k-1}+1)$-expander and has less than $2^{k+c+1}$ right nodes.
By Lemma~\ref{l1} and the above observation (applied to $\ell=3n/4$), all except for at most $2^{3n/4}$
left nodes of $G[k-1,k+c]$  have degree at least $n/(4(c+3))$. 

Pick now $\ell$ different integers $k_i$ with $n/4 < k_1 < k_2 < \ldots  < k_\ell < n/2$ that are $c+2$ apart of each other, where $\ell$ is about  $n/(4(c+2))$. 
In each graph $G[k_i - 1, k_i + c]$, all  the left nodes, except at most $2^{3n/4}$, have degree $\geq  n/(4(c+3))$.  Since the $k_i$ are $c+2$ apart, and thus the graphs 
$G[k_i - 1, k_i + c]$ have
pairwise disjoint right sets, it follows that $G$ has left nodes with degree $\Omega(n^2/(c+3)^2)$.
\end{proof}

\ver{As we have seen, Theorem~\ref{th1b} implies Theorem~\ref{th:quadraticlistt}.}

\subsection{Proof of Theorem~\ref{th:computeSetsShortest} (1).}

The size of list-approximators is closely related to total conditional Kolmogorov complexity, which was first introduced by A.~Muchnik
and used by Bauwens and Vereshchagin~\cite{BauwensPhd,Ver09}.  {\em Total conditional Kolmogorov complexity} with respect to $U$ is defined as:
\[
\CT_U(u|v) = \min \left\{ |q| : U(q,v) = u \wedge \forall z\left[ U(q,z) \text{ halts} \right] \right\}. 
\]
If $U$ is a standard machine then $\CT_U(u|v)\le\CT_V(u|v)+c_V$ for every machine $V$. 
The connection to list-approximators is the following:
\begin{lemma}
If $f$ is a computable function that maps every string to a finite list of strings
then $\CT_U(p|x)\le \log|f(x)|+\bigO{1}$ for any standard machine $U$ and every $p$ in $f(x)$. The constant in $O$-notation
depends on $f$ and $U$.
\end{lemma}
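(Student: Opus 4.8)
The plan is to exhibit one \emph{total} machine $V$ with respect to which $p$ has a short conditional program given $x$, and then invoke the quoted optimality property $\CT_U(u|v)\le\CT_V(u|v)+c_V$ of standard machines. Fix once and for all a computable bijection between $\zo^*$ and the naturals under which the string of value $j$ has length at most $\log_2(j+1)+O(1)$ (for instance the shortlex enumeration $\eps,0,1,00,01,\dots$ mapped to $0,1,2,3,4,\dots$). Define $V$ on input $(q,z)$ as follows: compute the list $f(z)$ --- this always terminates, since $f$ is a total computable function whose values are finite lists --- let $j$ be the value of $q$ under the fixed bijection, output the $j$-th element of $f(z)$ if $1\le j\le |f(z)|$, and output the empty string otherwise. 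Because $f$ is total and the range check is decidable, $V(q,z)$ halts for every $q$ and every $z$, so $V$ is a legitimate machine for which $\CT_V$ makes sense.

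Now fix any $x$ and any $p\in f(x)$, and let $i\in\{1,\dots,|f(x)|\}$ be a position at which $p$ occurs in $f(x)$. Let $q$ be the string whose value is $i$; then $|q|\le\log_2(i+1)+O(1)\le\log|f(x)|+O(1)$, while by construction $V(q,z)$ is defined for every $z$ and $V(q,x)=p$. Thus $q$ is a total conditional program for $p$ given $x$ with respect to $V$, whence $\CT_V(p|x)\le\log|f(x)|+O(1)$. Applying the optimality property of the standard machine $U$, namely $\CT_U(p|x)\le\CT_V(p|x)+c_V$ (equivalently: compose $q$ with the length-preserving translator $t$ witnessing that $U$ is standard, so that $U(t(q),z)=V(q,z)$ for all $z$ and $|t(q)|=|q|+O(1)$), we obtain $\CT_U(p|x)\le\log|f(x)|+O(1)$. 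The additive constant absorbs $c_V$ and the $O(1)$ coming from the encoding of the index, and hence depends only on $f$ and $U$, exactly as claimed.

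The only step that genuinely needs care is the totality of $V$: one must ensure that \emph{every} computation branch halts --- both the out-of-range case and, crucially, the evaluation of $f$ on an \emph{arbitrary} argument $z$ rather than on the given $x$ --- and this is precisely where the hypothesis that $f$ maps \emph{every} string to a \emph{finite} list is used. Everything else is routine bookkeeping about packing the index $i\le|f(x)|$ into $\log|f(x)|+O(1)$ bits.
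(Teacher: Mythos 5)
Your proof is correct and follows essentially the same route as the paper's: define a machine $V$ that, on input $(q,z)$, outputs the $q$-th element of the (always finite) list $f(z)$ and is therefore total, then invoke the optimality of the standard machine $U$ with respect to $\CT$. The paper's own proof is a terser version of exactly this argument; your extra care about the string-to-index encoding and about totality on arbitrary second arguments $z$ is just the bookkeeping the paper leaves implicit.
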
 
\begin{proof}
Let $V(j,x)$ stand for the $j$th entry of the list $f(x)$, if $j\le  |f(x)|$, and 
for the  (say) empty string otherwise. Obviously $\CT_V(p|x)\le \log|f(x)|$ for all $p$ in $f(x)$. Hence 
$\CT_U(p|x)\le \log|f(x)|+\bigO{1}$.  
\end{proof}

Thus to prove Theorem~\ref{th:computeSetsShortest}(1) it suffices to construct for every $c$  
a standard machine $U_c$ such that,  
for infinitely many $x$,  every $c$-short $p$ for $x$ with respect to $U_c$ satisfies $\CT_U(p|x)\ge |x|-\bigO{1}$.

We first consider the case $c=0$. Fix a standard machine $U$. We construct another machine $V$, a constant $d$ 
and a sequence of pairs of strings $(x_1, p_1), (x_2, p_2), \dots$ such that \\
(a) $p_k$ is the unique 0-short program for $x_k$ with respect to $V$,\\
(b) $C_U(x_k)\ge k$,\\
(c) $|x_k|=|p_k|=k+d$,\\
(d) $\CT_U(0p_k|x_k)\ge k$.

Once such $V$ has been constructed, we let 
$U_0(0q)=V(q)$ and $U_0(1^{d+2}q)=U(q)$. 
The latter equality guarantees that $U_0$ is a standard machine. And both equalities 
together with items (a), (b) and (c) imply that $0p_k$ is the unique 0-short program for $x_k$
with respect to $U_0$. Finally, item (d) guarantees that its total complexity conditional to $x_k$ 
is at least $|x_k|-d$.
The construction of $V, d$ and $(x_k,p_k)$ can be described in game terms. 
\smallskip

 \textbf{Description of the game.}
The game has integer  parameters $k,d$ and is played on a rectangular grid with $2^{k+d}$ rows and $2^{k+d}$ columns.
The rows and columns are identified with strings of length $k+d$. 
Two players, Bob and Alice, play in turn. In her turn Alice 
can either pass or put a token on the board. Alice can place at most one token in each row
and at most one token in each column. Once a token is placed, it can not be moved nor removed. 
In his turn Bob can either pass, or choose a column and \emph{disable} all its cells, or 
choose at most one cell in every column and \emph{disable} all of them. 
If a player does not pass, we say that she/he \emph{makes a move}.
Bob should make less than $2^{k+1}$ moves. 
The game is played for an infinite time and 
Alice loses if at some point after her turn, all her tokens are in disabled cells. 

We will show that, for $d=3$, for every $k$,  Alice wins this game. More specifically, there is a winning strategy 
for Alice that is uniformly computable given $k$. Assume that this is done. Then consider the following ``blind''
strategy for Bob: start an enumeration of all strings $x$ with $C_U(x)<k$ and all strings $q$ of length less than 
$k$ such that $U(q,x)$ halts for all $x$ of length $k+d$.
That enumeration can be done uniformly in $k$. In his $t$th turn Bob:
disables all cells in the $x$th column, if % *the* $x$th 
on step $t$ in this enumeration a new $x$ of length $k+d$  with $C_U(x)<k$ appears;
disables all cells $(p,x)$ with $|x|=|p|=k+d$, $U(q,x)=0p$, if
on step $t$  a new string $q$ of length less than 
$k$ appears such that $U(q,x)$ halts for all $x$ of length $k+d$; 
and passes if none of these events occurs.
Note that the total number of Bob's moves is less than $2^k+2^k=2^{k+1}$, as required.

Now consider the following machine $V(p)$: let $k=|p|-d$ and let  
the Alice's computable winning strategy play against Bob's blind strategy.
Watch the play waiting until Alice places a token on a cell $(p,x)$ in the $p$th row. Then output 
$x$ and halt. Note that such $x$ is unique (if exists), as Alice places at most one token in each row.
Because Alice's strategy is winning, at least one of the finitely many tokens in a cell $(p,x)$ will never be disabled. This pair $(p,x)$ satisfies all the requirements (a)--(d).
Thus it suffices to design a computable winning strategy for Alice.

 \textbf{A winning Alice's strategy.}
The strategy is a greedy one. In the first round Alice places a token in any cell. Then she waits until that cell becomes 
disabled. Then she places the second token in any enabled cell that lies in another row and another column and
again waits until that cell becomes disabled. At any time she chooses any enabled cell 
that lies in a row and a column that both are free of tokens. In order to show that Alice wins, we just
need to prove that there is such a cell. Indeed, Bob makes less than 
$2^{k+1}$ moves, thus Alice makes at most $2^{k+1}$ moves. On each of Bob's moves at most 
$2^{k+d}$ cells become disabled.  On each of Alice's moves at most $2^{k+d+1}$ cells becomes non-free because 
either their column or row already has a token.
Thus if the total number of cells is more than 
$$
2^{k+1}2^{k+d}+2^{k+1}2^{k+d+1}=6\cdot2^{2k}2^d, 
$$
we are done. The total number of cells is $2^{k+d}2^{k+d}=2^{2k}2^{2d}$.
As $2^{2d}$ grows faster that $6\cdot2^d$, for large enough $d$ (actually for $d=3$) 
the total number of cells is larger than the number of disabled or non-free cells.  

The case $c=0$ is done.
For arbitrary $c$ we 
change the construction a little bit by  
letting $U_c(1^{c+d+2}q)=U(q)$ instead of  
$U_0(1^{d+2}q)=U(q)$. The optimal machine $U_c$ constructed in this way depends on 
$c$, which is inevitable by Theorem~\ref{t:complist}.

\section{Other applications of explicit graphs with on-line matching}
\label{s:other}

Our applications are related to the resource bounded Kolmogorov complexity.
Recall that a machine $U$ is called standard if for any machine $V$ there is a 
total computable function $f$ such that $U(t(p),z)=V(p,z)$ and $|t(p)|\le |p|+\bigO{1}$ for all $p,z$.
In this section we assume that  $t$ is polynomial-time computable and that 
running time of  
$U(t(p),z)$ is bounded by a polynomial of the computation time of $V(p,z)$. 
By $C_U^T(x|z)$ we denote the minimal length of $p$ such that 
$U(p,z)=x$ in at most $T$ steps.

The applications consists in improving Muchnik's Theorem,  and the error term in the estimation of the distinguishing complexity of strings in a given set.

\subsection{Muchnik's theorem}

\begin{theorem}[Muchnik's Theorem~\cite{muc:j:condcomp,mus-rom-she:j:muchnik}]
\label{t:muchnik}
Let $a$ and $b$ be strings such that $|a|=n$ and $C(a \mid b) = k$. 
Then there exists a string $p$ such that (1) $|p| = k + \bigO{\log n}$, (2) $C(p \mid a) = \bigO{\log n}$, (3) $C(a \mid p,b) = \bigO{\log n}$. 
\end{theorem}
 See the cited works for a discussion of Muchnik's theorem.
\begin{theorem}[Improved version of Theorem~\ref{t:muchnik}]
Same statement as above except that we replace (2) by (2') $C^{q(n)}( p \mid a) = \bigO{\log n}$, where $q$ is a polynomial.
\end{theorem}
\begin{proof}
Fix an explicit graph $G = (L,R,E)$,  with $L=\zo^{n}$, polynomial left degree, and that has computable on-line matching
with logarithmic overhead (such a graph is obtained in the proof of Theorem~\ref{th2}).  Given a string $b$, 
run the  optimal machine $U(q,b)$ in parallel for all $q$. Once, for some $q$,  $U(q,b)$ halts with
the result $x$,  pass the request $(x,|q|)$ to the matching algorithm in the graph.
It will return a neighbor $p$ of length at most $|q|+\bigO{\log n}$ of $x$.  
At some moment a shortest program $q$ for $a$ conditional to $b$ will halt 
and we get the sought $p$.

As the graph is explicit and has polynomial degree, we 
have  $C^{\poly(n)}(p\mid a)=\bigO{\log n}$ (requirement (2')).
Requirement (1) holds by construction. Finally, $C(a \mid p,b) = \bigO{\log n}$
as given $p$ and $b$ we may identify $a$ by running the above algorithmic process (it is important 
that $a$ is the unique string that was matched to $p$).
\end{proof}

\medskip

\subsection{Distinguishing complexity.}

The $T$-bounded distinguishing complexity of a string $x$, introduced by Sipser~\cite{sip:c:randomness},  is the length of a minimal program that in $T$ steps  accepts $x$ and only $x$.
Formally, let $U$ be a machine, $x$ a string and $T$ a natural number.
The \emph{distinguishing complexity $CD^T_U(x)$ with respect to $U$} is defined 
as the minimal length of $p$ such that $U(p,x)=1$ ($p$ ``accepts'' $x$) in at most $T$ steps, and $U(p,x')=0$ 
for all $x'\ne x$ ($p$ ''rejects'' all other strings after any number of steps). 
From our assumption for the standard machine $U$,  it follows that for every machine $V$ there is a polynomial $f$ and a constant $c$ such 
that $CD^{f(T)}_U(x)\le CD^T_V(x)+c$. Indeed, let $p$ be a shortest distinguishing program for $x$ working in $T$ steps
with respect to $V$. Then $t(p)$ is a program for $U$ that accepts $x$ in $\poly(T)$ steps and rejects all other strings.

For a set $A$ of binary strings,  let $A^{=n}$ stand for the set of all strings of length $n$ in $A$. Buhrman, Fortnow and Laplante~\cite{bfl:j:boundedkolmogorov} have shown that most
strings have polynomial-time distinguishing complexity close to the information-theoretic minimum value.

\begin{theorem}[\cite{bfl:j:boundedkolmogorov}]
\label{t:disting}
For every function $\epsilon(n)$ (mapping natural numbers 
to numbers of the form $1/$natural) computable in time  $\poly(n)$ there is a polynomial $f$ such that 
for every set $A$, for all $x \in A^{=n}$ except for a fraction $\epsilon(n)$,
$CD^{f(n),A}_U(x) \leq \log|A^{=n}| + {\rm polylog}(n/\eps(n))$.
\end{theorem}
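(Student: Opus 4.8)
The plan is to reduce this statement to the existence of an explicit bipartite graph with on-line matching of logarithmic overhead (Theorem~\ref{th2}), exactly in the style of the Muchnik-theorem proof given just above. The key point is that a distinguishing program for $x$ within a set $A$ need only single out $x$ among the strings of $A^{=n}$, so a hash value that is injective on $A^{=n}$ suffices. First I would fix an explicit graph $G$ with $L=\zo^*$, left degree $\poly(n)$, and computable on-line matching with overhead $O(\log n)$, as furnished by Theorem~\ref{th2}. For a fixed $n$, consider the requests coming from strings $x\in A^{=n}$: I want to feed in a request $(x,k)$ with $k = \lceil\log |A^{=n}|\rceil + O(1)$, so that the matching assigns to $x$ a neighbor $p(x)$ of length at most $\log|A^{=n}| + O(\log n)$, and distinct $x$'s get distinct hash values.

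The subtlety is that the matcher needs the requests to arrive with the guarantee that at most $2^k$ of them share the second component $k$ — which holds here since $|A^{=n}| \le 2^k$ — but a distinguishing algorithm must, given $p$, recover $x$ using only oracle access to $A$ and bounded time, and it does not know $A^{=n}$ in advance nor in what order the matcher processed things. The standard fix, and the step I expect to be the main obstacle, is derandomizing/ordering the request stream: one processes the candidate strings $x\in\zo^n$ in lexicographic order, querying the oracle $A$ to decide membership, and passes the request $(x,k)$ to the on-line matcher precisely for those $x\in A^{=n}$. Since the matcher is on-line and its strategy is computable and explicit, the neighbor $p(x)$ it returns is determined by $x$ together with the set $\{x' \in A^{=n} : x' <_{\text{lex}} x\}$; hence from $p$, the value $k$, and oracle access to $A$ one can replay this process, recompute the matching, and identify the unique $x$ with $p(x)=p$ — all in time $\poly(n)\cdot 2^n$ naively, but this is where the fraction-$\epsilon(n)$ slack and a counting/sampling argument from~\cite{bfl:j:boundedkolmogorov} come in to cut the running time down to $f(n) = \poly(n/\epsilon(n))$: one only needs to correctly handle all but an $\epsilon(n)$-fraction of $x\in A^{=n}$, so a random sample of $\poly(n/\epsilon(n))$ membership queries suffices to approximate $|A^{=n}|$ (hence $k$) and to handle the matching for the typical $x$, with the polylog term in the bound absorbing the description of the sample and the error parameters.

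Concretely the steps are: (i) invoke Theorem~\ref{th2} to get the explicit on-line-matching graph; (ii) set $k \approx \log|A^{=n}|$, observing that estimating $|A^{=n}|$ to within a constant factor via $\poly(n/\epsilon(n))$ oracle samples costs only $\mathrm{polylog}(n/\epsilon(n))$ bits to describe and introduces at most an $\epsilon(n)/2$ error fraction; (iii) run the on-line matcher over the stream of confirmed members of $A^{=n}$ in lex order, producing $p(x)$ with $|p(x)| \le \log|A^{=n}| + O(\log n) \le \log|A^{=n}| + \mathrm{polylog}(n/\epsilon(n))$; (iv) build the distinguishing program $q$ that hard-codes $p(x)$, $k$, and the sampling seed, and on input $x'$ with oracle $A$ replays the matching to test whether $x'$ is the unique string mapped to $p(x)$, accepting iff so, all within $f(n)$ steps; (v) conclude $CD^{f(n),A}_U(x') \le \log|A^{=n}| + \mathrm{polylog}(n/\epsilon(n))$ for all but an $\epsilon(n)$-fraction of $x'\in A^{=n}$, using that $U$ is standard so the machine $V$ just described can be simulated by $U$ with only polynomial time overhead and $O(1)$ additive length. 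The main obstacle is item (iv): making the replay of the on-line matcher run in time polynomial in $n/\epsilon(n)$ rather than exponential in $n$, which forces the argument to tolerate the $\epsilon(n)$ error fraction and to rely on a concentration bound for the sampling estimate of $|A^{=n}|$ and of the matcher's behavior on the typical request.
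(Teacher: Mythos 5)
Your plan to port the on-line matching machinery from the Muchnik-theorem proof does not work here, and the gap is exactly at the step you flag as ``the main obstacle.'' In the Muchnik setting, the decoder is allowed unbounded time: given $p$ and $b$ it re-runs the whole enumeration $U(\cdot,b)$ and the matcher, so the hash $p(x)$ being a function of the entire history is harmless. For $CD^{T,A}$ with $T=\poly(n)$ the verifier must, on input a candidate $x'$ and oracle $A$, decide \emph{locally} and quickly whether $x'$ is the distinguished string. Replaying the on-line matcher over all of $A^{=n}$ is exponential in $n$, and your proposed fix does not repair this: (i) with only a membership oracle you cannot sample uniformly from $A^{=n}$, and when $|A^{=n}|$ is small, estimating it by random probing already takes $2^n/|A^{=n}|$ queries, not $\poly(n/\eps)$; (ii) even if $|A^{=n}|$ (hence $k$) were known exactly, the match $p(x)$ assigned by the on-line strategy depends on the \emph{entire set} $\{x'\in A^{=n}: x'<_{\text{lex}}x\}$, not just on its cardinality, so an approximate count gives you no handle on which hash a given $x$ received. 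The $\eps(n)$ slack in the theorem is not there to absorb sampling error, and no ``concentration bound for the matcher's behavior'' is in play.

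The paper instead sidesteps on-line matching entirely and builds a graph $G_{n,k,\eps}$ with $L=\zo^n$, $|R|=2^{k}\poly(n)/\eps^{O(1)}$, degree $\poly(n/\eps)$, and the \emph{low-congestion} property: for every $S\subseteq L$ with $|S|\le 2^k$, all but an $\eps$-fraction of $x\in S$ have some neighbor $p$ with no other neighbor in $S$. That property makes $(p,i,n,k)$ a self-certifying distinguishing program: the verifier checks $x\in A^{=n}$ with one oracle call and checks that $p$ is the $i$-th neighbor of $x$ in explicit, $\poly(n/\eps)$ time --- no replay, no estimate of $|A^{=n}|$ needed (for the $\le$ direction one simply takes $k=\lceil\log|A^{=n}|\rceil$ when choosing the program; the verifier receives $k$ as part of it). The $\eps$-fraction loss comes combinatorially from the graph, and the graph itself is obtained by taking the Ta-Shma--Umans--Zuckerman disperser with threshold $K=2^k\eps$ to get an expander, observing that at most an $\eps$-fraction of $x\in S$ have all neighbors ``fat'' (degree $>D/\eps$ into $S$), and then splitting fat right nodes by prime-modulus hashing (the BFL Lemma~3) to bring the congestion down from $D/\eps$ to~$1$. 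So the key missing idea in your proposal is the switch from a \emph{matching} guarantee (which is global and history-dependent) to a \emph{per-element unique-neighbor / low-congestion} guarantee (which is local and hence time-bounded verifiable), together with the prime-hashing step that the disperser alone does not provide.
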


We mean here that the set $A$ is given to the standard machine $U$ as an oracle (so we assume that the standard machine is an oracle machine
and all the requirements hold for every oracle.)

Our improvement of the above theorem does not use explicitly graphs with on-line matching. However, it uses an argument similar to the one in
Theorem~\ref{th2} to obtain a graph with a certain ``low-congestion'' property, which would allow most nodes making matching requests to have their 
``reserved'' matching node.
\begin{theorem}[Improved version of Theorem~\ref{t:disting}]
Same statement as above, except that  we obtain $CD^{f(n),A}(x) \leq \log|A^{=n}| 
+ \bigO{\log(n/\eps(n))}$, \ie, we reduce the error term from $ {\rm polylog}(n/\eps(n))$ to $ \bigO{\log(n/\eps(n))}$ .
\end{theorem}
\begin{proof}
For our improvement we need for every $n$, $k\le n$ and $\eps$ 
a bipartite graph  $G_{n,k,\eps}$ with $L=\zo^n$, $R=\zo^{k+\bigO{\log n/\eps}}$ and degree $\poly(n/\eps)$ that has the following ``low-congestion"  property:
\begin{quote}
for every subset $S$ of at most $2^k$ left nodes   
for every node $x$ in $S$ except for a fraction $\eps$ there is a right neighbor $p$ of $x$ 
such that $p$ has no other neighbors in $S$. 
\end{quote}

Assume that we have such an explicit family of graphs $G_{n,k,\eps}$. 
Here, explicit means that given $n$, $k$, $\eps$,  a left node $x$ and $i$,  we can in polynomial time find 
the $i$th neighbor of $x$ in  $G_{n,k,\eps}$.
Then we can construct 
a machine $V$ that,  given a tuple $(p,i,n,k)$, a string $x$ and $A$ as oracle,  
verifies that $x$ is in $A^{=n}$ and that $p$ is the $i$th neighbor of $x$ in $G_{|x|,k,\eps(n)}$. 
If this is the case it accepts and rejects otherwise.
By the property of the graph, applied to $S=A^{=n}$ and $k=\lceil\log|S|\rceil$ we see that
$$
CD^{f(n),A}_V(x)\le|(p,i,n,k)|\le \log |A^{=n}|+\bigO{\log n/\eps(n)}
$$
for some polynomial $f(n)$ for all but a fraction $\eps(n)$ for $x\in A^{=n}$.
By the assumptions on $U$,  the same inequality holds for $U$.

The graph $G_{n,k,\eps}$ is again obtained from the disperser of Ta-Shma et al.~\cite{ats-uma-zuc:j:expanders}.
Given $n$, $k$ and $\eps$,  we  apply Theorem~\ref{t:tuz} to $K=\eps 2^k$ and $\delta=1/2$.
We obtain a \ver{$(\eps 2^k, \frac{\alpha \eps 2^k D}{2n^3})$}-expander with degree $D=\poly(n)$, $L=\zo^n$ and
\ver{$|R| = \frac{\eps\alpha 2^k D}{n^3}$}.
Consider $t=\max\{1,\lceil \frac{2n^3}{\alpha D}\rceil\}$ disjoint copies of this graph and identify left
nodes of the resulting graphs (keeping their sets of right nodes disjoint). 
We get an explicit  $(2^k\eps, 2^k\eps)$-expander with $2^n$ left and \ver{$2^{k}\poly(n)\eps$} right nodes
and degree \ver{$D=\poly(n)t=\poly(n)$.} 

This graph, called $H_{n,k,\eps}$, has the following weaker ``low-congestion"  property:
\emph{for every set of $2^k$ left nodes $S$ for every node $x$ in $S$ except for a fraction $\eps$ there is a right neighbor $p$ of $x$ 
such that $p$ has at most $D/\eps$ neighbors in $S$.} 

Indeed, the total number of edges in the graph originating in  $S$ is at most $|S|D$. 
Thus less than $|S|D/(D/\eps)=|S|\eps$ right nodes are ``fat'' in the sense that they have more 
than  $D/\eps$ neighbors landing in $S$. By the expander property of $H_{n,k,\eps}$ there are less than $\eps|S|$
left nodes in $S$ that have only fat  neighbors.

It remains to ``split'' right nodes of $H_{n,k,\eps}$ so that $D/\eps$ becomes 1.
This is done exactly as in Buhrman et al.~\cite{bfl:j:boundedkolmogorov}. Using the Prime Number Theorem, it is not hard  
to show (Lemma 3 in Buhrman et al.~\cite{bfl:j:boundedkolmogorov}) that for every set $W$ of $d$ strings of length $n$
the following holds:
\emph{for every $x\in W$ there is a prime number $q\le 4dn^2$ such that $x\not\equiv x' \pmod q$   
for all $x'\in W$ different from $x$} (we identify here natural numbers and their binary expansions).

We apply this lemma to $d=D/\eps$. 
To every right node $p$ in $H_{n,k,\eps}$ we add a prefix code of two natural numbers $a,q$, both at  most 
$4dn^2$, and connect a left node $x$ to $(p,a,q)$ if $x$ is connected to $p$ in $H_{n,k,\eps}$
and $x\equiv a \pmod q$. We obtain the graph $G_{n,k,\eps}$ we were looking for.
Indeed, for every $S$ of $2^k$ left nodes for all $x\in S$ but a fraction 
of $\eps$ there is a neighbor $p$ of $x$ in $H_{n,k,\eps}$ that has at most $d=D/\eps=\poly(n)/\eps$
neighbors in $S$. Besides there is a prime $q\le 4n^2d= \poly(n)/\eps$ such that 
$x\not\equiv x' \pmod q$   
for all neighbors $x'$ of $p$ different from $x$. Thus the neighbor $(p,q,x \bmod q)$ of $x$ in   $G_{n,k,\eps}$
has no other neighbors in $S$.

The degree of $G_{n,k,\eps}$ is $D\times (4n^2D/\eps)^2=\poly(n)/\eps^2$. The number of right
nodes is 
\ver{$$
(\poly(n)2^k\eps)(4n^2D/\eps)^2=2^k\poly(n)/\eps.
$$} 
Thus right nodes can be identified with strings of length $k+\bigO{\log n/\eps}$ and we are done.
\end{proof}

\medskip

\if01
\textbf{Compression of samplable sources, Theorem 6.1~\cite{tre-vad-zuc:j:compression}.} Let $X_n$ be a flat source with membership algorithm and $H(X_n) \leq k$. Then $X_n$ is compressible to length $k + {\rm polylog}(n-k)$ in time $\poly(n, 2^{n-k})$.

In our version, the compression length becomes $k + \bigO{\log n}$.

\marginpar{We need a rigorous formulation and a proof}
\fi

\section{Acknowledgments} We are grateful to Andrei Romashchenko and Alexander (Sasha) Shen 
for useful discussion. 
We thank Jason Teutsch for stimulating the investigation of
Theorem~\ref{t:complist} with higher precision than $\bigO{\log |x|}$.
We also thank the anonymous reviewers for many detailed comments on the manuscript.

%\bibliography{theory}

\bibliographystyle{plain}
\bibliography{theory}

\end{document}